\documentclass[pdflatex,sn-mathphys-num]{sn-jnl}

\usepackage{graphicx}
\usepackage{amsmath,amssymb,amsfonts}
\usepackage{amsthm}
\usepackage{mathtools}
\usepackage{mathrsfs}
\usepackage[title]{appendix}
\usepackage{placeins}
\usepackage{xurl}
\usepackage{booktabs}
\usepackage{tabularx}
\usepackage{xurl}
\usepackage{xcolor}
\usepackage{textcomp}
\usepackage{manyfoot}
\usepackage{etoolbox}
\usepackage{bm}
\usepackage{tikz}
\IfFileExists{glyphtounicode.tex}{\input{glyphtounicode}\pdfgentounicode=1}{}
\newcolumntype{Y}{>{\raggedright\arraybackslash}X}

\makeatletter
\patchcmd{\@maketitle}{Corresponding author(s). E-mail(s): }{E-mail: }{}{\ClassError{sn-jnl-patch}{e-mail line patch failed}{}}
\def\email#1{\global\advance\emailcnt by 1\relax%
\if@corauemail
   \g@addto@macro\corrauthemail{\setcounter{footnote}{0}\textcolor{blue}{#1}}%
\else
   \g@addto@macro\authemail{\setcounter{footnote}{0}\textcolor{blue}{#1}}%
\fi}
\makeatother

\newcommand{\vE}{\mathbf{E}}
\newcommand{\vD}{\mathbf{D}}
\newcommand{\vP}{\mathbf{P}}
\newcommand{\vB}{\mathbf{B}}
\newcommand{\vH}{\mathbf{H}}
\newcommand{\vM}{\mathbf{M}}
\newcommand{\vJ}{\mathbf{J}}
\newcommand{\vK}{\mathbf{K}}
\newcommand{\vp}{\mathbf{p}}
\providecommand{\vr}{}\renewcommand{\vr}{\mathbf{r}}
\newcommand{\vF}{\mathbf{F}}
\newcommand{\vq}{\mathbf{q}}
\newcommand{\nhat}{\hat{\mathbf{n}}}
\newcommand{\zhat}{\hat{\mathbf{z}}}
\newcommand{\rhat}{\hat{\mathbf{r}}}
\newcommand{\rhohat}{\hat{\boldsymbol{\rho}}}
\newcommand{\phihat}{\hat{\boldsymbol{\phi}}}
\newcommand{\thetahat}{\hat{\boldsymbol{\theta}}}
\newcommand{\qhat}{\hat{\mathbf{q}}}
\newcommand{\curl}{\nabla\times}
\newcommand{\dive}{\nabla\cdot}
\newcommand{\divs}{\nabla_{\!s}\cdot}
\newcommand{\vj}{\mathbf{j}}
\newcommand{\vk}{\boldsymbol{\kappa}}
\newcommand{\Jf}{\vJ_f}
\newcommand{\Jb}{\vJ_b}
\newcommand{\Jtot}{\vJ_{\mathrm{tot}}}
\newcommand{\JV}{\boldsymbol{\mathcal J}}
\newcommand{\rhotot}{\rho_{\mathrm{tot}}}
\newcommand{\Vm}{V_m}
\newcommand{\dd}{\,\mathrm{d}}

\makeatletter
\@ifundefined{th@thmstyleone}{%
\newtheoremstyle{thmstyleone}%
{18pt plus2pt minus1pt}
{18pt plus2pt minus1pt}
{\small\itshape}
{0pt}
{\small\bfseries}
{}
{.5em}
{\thmname{#1}\thmnumber{\@ifnotempty{#1}{ }\@upn{#2}}\thmnote{ {\the\thm@notefont(#3)}}}%
}{}
\makeatother
\theoremstyle{thmstyleone}
\newtheorem{proposition}{Proposition}

\allowdisplaybreaks
\AtBeginDocument{\ifdefined\bibsep\setlength{\bibsep}{0.7em}\fi}

\begin{document}

\title[Compensated Sources and Nonradiating Conditions]
{Compensated Free and Bound Sources and Nonradiating Conditions in Classical Electrodynamics}

\author*[1]{\fnm{Natan} \sur{Rentzber}}\email{nrentzbe@uccs.edu}

\affil*[1]{\orgdiv{Center for Magnetism and Magnetic Nanostructures},

\orgname{University of Colorado},
\orgaddress{\city{Colorado Springs}, \postcode{80918}, \country{USA}}}

\abstract{Classical macroscopic electrodynamics allows free and bound sources to cancel as distributions while the auxiliary equations retain nonzero free-source terms. A neutralized uniformly polarized sphere has $\vE=\mathbf 0$ and $\vD=\vP$, and a compensated uniformly magnetized sphere has $\vB=\mathbf 0$ and $\vH=-\vM$. These auxiliary fields are fixed by $\vP$ and $\vM$ rather than forming independent degrees of freedom. Time dependence separates charge cancellation from complete four-current cancellation. Canceling the charge alone leaves a divergence-free total current whose on-shell transverse transform controls radiation. For a neutralized polarized sphere of radius $R$ maintained by a tangential free-current sheet, the exterior field equals that of a point electric dipole whose effective moment is proportional to the spherical Bessel function $j_2(kR)$, where $k=\omega/c$. Every charge multipole and the ordinary magnetic dipole vanish, yet the source radiates at generic frequencies, and the complete exterior field vanishes at nonzero roots of $j_2$. Complete four-current cancellation removes all retarded source-generated fields. Among the compactly supported separable currents that maintain the same charge, it is the only choice that never radiates. In vacuum the auxiliary fields reduce to $\vD=\varepsilon_0\vE$ and $\vH=\vB/\mu_0$, so no radiation can be carried by $\vD$ and $\vH$ alone.}

\keywords{Classical Electrodynamics, Nonradiating Sources, Auxiliary Fields, Polarization and Magnetization}

\maketitle

\clearpage
\setcounter{tocdepth}{2}
\tableofcontents
\clearpage

\section{Introduction}
Classical macroscopic electrodynamics separates total charge and current densities into free and bound contributions. The bound sources are represented through the polarization $\vP$ and magnetization $\vM$, while the auxiliary fields are defined by~\cite{griffiths,jackson,hehl}
\begin{equation}
\vD=\varepsilon_0\vE+\vP,\qquad \vH=\frac{1}{\mu_0}\vB-\vM.
\label{eq:constit}
\end{equation}
The source equations may then be written as $\dive\vD=\rho_f$ and $\curl\vH=\Jf+\partial\vD/\partial t$. This leaves the fields $\vE$ and $\vB$ fixed by the total sources, while the values assigned to $\vD$ and $\vH$ depend on the material polarization, magnetization, and free-bound partition. The physical status of this source split and of the auxiliary fields has been discussed extensively~\cite{roche,gkm,hehl,mansuripur}.
\\

Exact compensation can be imposed between the free and bound sources. In the electric case, let
\\
\begin{equation}
\rho_f(\vr)+\rho_b(\vr)=0,
\label{eq:cancelrho}
\end{equation}
\\
with neither density identically zero. Gauss's law then gives $\dive\vE=0$, while the auxiliary equation retains $\dive\vD=\rho_f\not\equiv0$. In the magnetic case, take a magnetostatic bound current $\Jb=\curl\vM$ and impose
\begin{equation}
\Jf(\vr)+\curl\vM(\vr)=\mathbf 0,
\label{eq:cancelJ}
\end{equation}
again with neither term identically zero. The corresponding equations become $\curl\vB=\mathbf 0$ and $\curl\vH=\Jf\not\equiv\mathbf 0$. These relations raise two distinct issues. One concerns the fields that remain inside matter after exact source compensation. The other concerns whether a time-dependent compensated body can emit an auxiliary field into vacuum without the corresponding $\vE$ or $\vB$ field.
\\

The static and dynamic problems require different levels of cancellation. A local condition such as $\dive\vE=0$ or $\curl\vB=\mathbf 0$ does not determine the full field. Exact global cancellation must include every volume and surface contribution together with the boundary conditions. In the time-dependent case, $\rhotot=0$ implies $\dive\Jtot=0$ but does not imply $\Jtot=\mathbf 0$. A divergence-free total current may still have a nonzero on-shell transverse amplitude and radiate. Complete four-current cancellation sets both $\rhotot$ and $\Jtot$ to zero distributionally and removes the retarded source terms. Any field that reaches vacuum must satisfy $\vD=\varepsilon_0\vE$ and $\vH=\vB/\mu_0$, which rules out auxiliary-only radiation.
\\

The analysis develops exact compensated electric and magnetic bodies that realize these possibilities. A neutralized uniformly polarized sphere gives $\vE=\mathbf 0$ and $\vD=\vP$, while a compensated uniformly magnetized sphere gives $\vB=\mathbf 0$ and $\vH=-\vM$. Two time-dependent realizations of the polarized sphere separate charge cancellation from complete four-current cancellation. When the neutralizing charge is maintained by a tangential free-current sheet, the exact radiation-zone amplitude is proportional to $j_2(kR)$. The source radiates at generic frequencies although every charge multipole and its ordinary magnetic dipole vanish. At each nonzero root of $j_2(kR)$, the full outgoing exterior field vanishes by the exterior uniqueness theorem. The exterior field is in fact the exact point-dipole field of an effective moment proportional to $j_2(kR)$, and the completely canceled current is shown to be the unique maintaining current within the compactly supported separable class that never radiates. These results connect compensated macroscopic sources with the classical theory of radiating and nonradiating current distributions~\cite{ehrenfest,schott,goedecke,devaneywolf,gbur,abbottgriffiths}.
\\

A companion paper~\cite{twocurrents} utilizes the same example problem and solves the neutralized sphere as a boundary-value problem and obtains its interior fields, its energy balance, and its comparison with the bare polarized sphere. The analysis in this paper treats the auxiliary fields, the magnetized version, and the uniqueness question, and it reaches the same exterior coefficient by the on-shell transform.
\\

Section~\ref{sec:maxwell} fixes the source conventions and the distributional compensation rule. Sections~\ref{sec:questions} through~\ref{sec:radiation} establish the local, global, and radiative consequences. Section~\ref{sec:bodies} gives the explicit bodies and the exact $j_2(kR)$ result. The appendix derives the current transform used in the radiation calculation.

\section{Macroscopic Maxwell Equations and Auxiliary Fields}
\label{sec:maxwell}
SI units are used. On a material boundary, the unit normal $\nhat$ points from the material into the vacuum. The symbol $\Vm^c$ denotes the vacuum exterior of the material region $\Vm$, and $\divs$ denotes the surface divergence. The macroscopic Maxwell equations are
\begin{align}
\dive\vD&=\rho_f, & \dive\vB&=0,\\
\curl\vE&=-\frac{\partial\vB}{\partial t}, &
\curl\vH&=\Jf+\frac{\partial\vD}{\partial t},
\end{align}
together with Eq.~\eqref{eq:constit} and the bound-source definitions
\begin{equation}
\rho_b=-\dive\vP,\qquad \Jb=\frac{\partial\vP}{\partial t}+\curl\vM.
\label{eq:bound}
\end{equation}
The bound sources in Eq.~\eqref{eq:bound} have directly observable effects~\cite{herczynski}. The distinction between free and bound sources depends on the macroscopic material model and on the coarse-graining convention. Any reassignment that leaves $\rhotot$ and $\Jtot$ unchanged also leaves the resulting $\vE$ and $\vB$ unchanged.
\\

Equivalently, $\vB$ obeys $\curl\vB=\mu_0\Jtot+c^{-2}\partial\vE/\partial t$, where $\Jtot\equiv\Jf+\Jb$ and $c^2=1/(\mu_0\varepsilon_0)$. The equation for $\vH$ follows algebraically from this relation and Eq.~\eqref{eq:constit}. The terms $\curl\vM$ and $\partial\vP/\partial t$ have simply been moved into the definitions of $\vH$ and $\vD$. Likewise, $\dive\vD=\rho_f$ is Gauss's law $\dive\vE=\rhotot/\varepsilon_0$, with $\rhotot\equiv\rho_f+\rho_b$, after the bound charge $-\dive\vP$ is absorbed into $\vD$. Equation~\eqref{eq:constit} defines $\vD$ and $\vH$ once $\vP$ and $\vM$ are specified. It is not a constitutive relation. A constitutive law, such as $\vP=\varepsilon_0\chi_e\vE$ or $\vM=\chi_m\vH$ in a simple linear medium, specifies the material response. No such law is assumed here. In this macroscopic source description, the auxiliary equations do not add Lorentz-force fields beyond $\vE$ and $\vB$. The term auxiliary records that their values and source assignments depend on the chosen free and bound partition and on the material description. References~\cite{roche,gkm,hehl,mansuripur} discuss this structure in detail.
\\

All materials and interfaces are stationary in the working frame. Magnetic monopoles and dual currents are excluded, although ordinary surface currents and magnetization sheets are allowed. Volume fields are piecewise $C^1$ with the stated axis limits, and localized constructions decay at infinity. The infinite planar and cylindrical examples in Sec.~\ref{sec:analogs} are fixed by symmetry instead. Two standing assumptions specify the remaining field freedom. The static assumption excludes any externally imposed field, adopts the stated falloff or symmetry, and applies to every static result. The retarded prescription selects outgoing solutions with no incident field and no independent homogeneous field and applies to every dynamic result.

\FloatBarrier
\subsection{Distributional Sources and a General Compensation Rule}
The symbols $\Jf$ and $\Jb$ denote volume current densities, while $\rho_f$ and $\rho_b$ denote volume charge densities. Surface sheets are included distributionally. Let $\chi(\vr)$ be the indicator function of a stationary region $\Vm$, equal to $1$ inside and $0$ outside. Its gradient is $\nabla\chi=-\nhat\,\delta_S$, where $S\equiv\partial\Vm$. The surface delta function is defined by $\int f\,\delta_S\dd^3r=\oint_S f\dd S$. Fields $\vP$ and $\vM$ defined on the closure of $\Vm$ and extended by zero outside then carry the bound sources
\begin{align}
\rho_b&=-\dive(\vP\chi)=-(\dive\vP)\,\chi+(\vP\cdot\nhat)\,\delta_S,
\label{eq:rhobdist}\\
\Jb&=\frac{\partial(\vP\chi)}{\partial t}+\curl(\vM\chi)
=\Bigl(\frac{\partial\vP}{\partial t}+\curl\vM\Bigr)\chi+(\vM\times\nhat)\,\delta_S.
\label{eq:jbdist}
\end{align}
These expressions display the surface densities $\sigma_b=\vP\cdot\nhat$ and $\vK_b=\vM\times\nhat$ with their signs. Every later use of ``distributional'' has this meaning. Equations~\eqref{eq:rhobdist} and~\eqref{eq:jbdist} also give a general compensation rule. Choose the free sources
\begin{align}
\rho_f^{\mathrm{dist}}&=(\dive\vP)\,\chi-(\vP\cdot\nhat)\,\delta_S,\nonumber\\
\Jf^{\mathrm{dist}}&=\Bigl(-\frac{\partial\vP}{\partial t}-\curl\vM\Bigr)\chi-(\vM\times\nhat)\,\delta_S.
\label{eq:recipe}
\end{align}
The volume densities $\rho_f=\dive\vP$ and $\Jf=-\partial\vP/\partial t-\curl\vM$ inside $\Vm$, together with the sheets $\sigma_f=-\vP\cdot\nhat$ and $\vK_f=-\vM\times\nhat$, then cancel the bound four-current exactly. Therefore, $\rhotot=0$ and $\Jtot=\mathbf 0$ as distributions for any $\vP$ and $\vM$ in a fixed bounded region. The free four-current is conserved because the bound four-current satisfies
\begin{equation}
\frac{\partial\rho_b}{\partial t}+\dive\Jb
=-\dive\frac{\partial(\vP\chi)}{\partial t}
+\dive\Bigl[\frac{\partial(\vP\chi)}{\partial t}+\curl(\vM\chi)\Bigr]
=\dive\curl(\vM\chi)=0.
\end{equation}
The free distribution is the exact negative of the bound distribution and obeys the same continuity identity. Every construction below either follows Eq.~\eqref{eq:recipe} or departs from it in a stated way. In the volume constructions of Sec.~\ref{sec:volume}, $\vP$ and $\vM$ vanish on $\partial\Vm$, so no sheets appear. For the static uniform spheres, the bound charge or magnetization current lies entirely on the surface. The time-dependent electret also carries a volume polarization current. Its exact-cancellation version includes the matching free volume current. The slab and cylinder in Sec.~\ref{sec:analogs} are the infinite examples. The tangential-sheet electret in Sec.~\ref{sec:electret} cancels charge but deliberately leaves a nonzero total current.

\FloatBarrier
\subsection{Local and Global Cancellation}
Equations~\eqref{eq:cancelrho} and~\eqref{eq:cancelJ} concern volume densities inside $\Vm$. They imply $\dive\vE=0$ or $\curl\vB=\mathbf 0$ only in that region. A global statement about the primary field requires every volume and surface contribution to the total source to vanish on $\mathbb{R}^3$. An uncompensated surface charge $\sigma_f+\sigma_b\neq0$ or surface current $\vK_f+\vK_b\neq\mathbf 0$ can leave a field even when the volume densities cancel. The two sphere constructions in Sec.~\ref{sec:bodies} have exactly this surface-source structure. For a uniformly polarized or magnetized sphere in statics, the bound source lies on $r=R$, while $\rho_b=0$ and $\Jb=\mathbf 0$ in the interior. A matching free sheet neutralizes that surface source. These are boundary versions of the same apparent paradox, and the resolution must be applied distributionally across the surface.

\section{Local Compensated Source Relations}
The local consequences of Eqs.~\eqref{eq:cancelrho} and~\eqref{eq:cancelJ} follow directly from the macroscopic Maxwell equations.

\begin{proposition}[Electric]\label{prop:E}
Let $\Vm$ be a bounded material region in which Eq.~\eqref{eq:cancelrho} holds pointwise. The local relations inside $\Vm$ are
\\
\begin{equation}
\dive\vE=0,\qquad \dive\vD=\rho_f.
\end{equation}
\\
If $\rho_f\not\equiv0$ the two divergences are unequal.
\\
\end{proposition}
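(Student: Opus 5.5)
The plan is to read both relations directly off the macroscopic Maxwell equations of Sec.~\ref{sec:maxwell}, working pointwise in the interior of $\Vm$, where the hypotheses make the fields piecewise $C^1$.

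\emph{Step 1: $\dive\vE=0$.} Write $\dive\vD=\rho_f$ together with the definition $\vD=\varepsilon_0\vE+\vP$ from Eq.~\eqref{eq:constit} and the bound charge $\rho_b=-\dive\vP$ from Eq.~\eqref{eq:bound}. Substituting gives Gauss's law in the form $\varepsilon_0\dive\vE=\rho_f+\rho_b=\rhotot$. By hypothesis Eq.~\eqref{eq:cancelrho} holds at every point of $\Vm$, so $\dive\vE=0$ there.

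\emph{Step 2: $\dive\vD=\rho_f$.} This is the auxiliary source equation itself, restricted to $\Vm$. One can also see it from Step 1: $\dive\vD=\varepsilon_0\dive\vE+\dive\vP=-\rho_b=\rho_f$, which is a useful consistency check.

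\emph{Step 3: the divergences are unequal.} If $\rho_f\not\equiv0$ on $\Vm$, pick a point where $\rho_f\neq0$. At that point $\dive\vD=\rho_f\neq0=\dive\vE$, so the two divergences differ as functions on $\Vm$. Note that this compares $\dive\vD$ with $\dive\vE$ itself; equivalently, $\dive\vD-\varepsilon_0\dive\vE=\rho_f$.

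The only delicate point is the scope of the statement. Equation~\eqref{eq:cancelrho} is a statement about volume densities in the interior of $\Vm$, so the conclusions must be stated as holding pointwise in the interior, wherever the divergences exist classically. Nothing is claimed about the surface sheets $\sigma_b=\vP\cdot\nhat$ or about $\vE$ outside $\Vm$, consistent with the local-versus-global distinction drawn in the preceding subsection. The proof should state this restriction explicitly; no step presents a genuine obstacle.
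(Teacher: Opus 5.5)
Your proposal is correct and follows the same route as the paper, which also proves the statement by direct substitution of Eq.~\eqref{eq:cancelrho} into $\dive\vE=\rhotot/\varepsilon_0$ and $\dive\vD=\rho_f$. Your explicit derivation of Gauss's law from $\vD=\varepsilon_0\vE+\vP$ and $\rho_b=-\dive\vP$, and your note restricting the claims to the interior of $\Vm$, spell out steps the paper leaves implicit.
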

\begin{proposition}[Magnetic]\label{prop:B}
Let $\Vm$ be a bounded material region in magnetostatics. Assume that the bound current is $\Jb=\curl\vM$ and that Eq.~\eqref{eq:cancelJ} holds pointwise. Inside $\Vm$,
\\
\begin{equation}
\curl\vB=\mathbf 0,\qquad \curl\vH=\Jf.
\\
\end{equation}
If $\Jf\not\equiv\mathbf 0$ the two curls differ.
\end{proposition}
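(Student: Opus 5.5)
The statement to prove is Proposition~\ref{prop:B}. The plan is a direct local computation from the macroscopic equations in Sec.~\ref{sec:maxwell}, restricted to the interior of $\Vm$, where the fields are $C^1$ by the standing regularity assumption. In magnetostatics all time derivatives vanish, so $\partial\vD/\partial t=\mathbf 0$ and $\partial\vP/\partial t=\mathbf 0$. The total-current form of Amp\`ere's law, $\curl\vB=\mu_0\Jtot+c^{-2}\partial\vE/\partial t$, then reduces to $\curl\vB=\mu_0\Jtot$. By hypothesis the bound current is $\Jb=\curl\vM$, so $\Jtot=\Jf+\curl\vM$. Equation~\eqref{eq:cancelJ}, imposed pointwise in $\Vm$, gives $\Jtot=\mathbf 0$ there, and hence $\curl\vB=\mathbf 0$ in $\Vm$.

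For $\vH$, I take the curl of the definition~\eqref{eq:constit} to get $\curl\vH=\mu_0^{-1}\curl\vB-\curl\vM$. Substituting $\curl\vB=\mu_0(\Jf+\curl\vM)$ gives $\curl\vH=\Jf$. The same result follows directly from the auxiliary Amp\`ere equation $\curl\vH=\Jf+\partial\vD/\partial t$ once $\partial\vD/\partial t=\mathbf 0$. Using the cancellation, this can also be written as $\curl\vH=-\curl\vM$. Finally, if $\Jf\not\equiv\mathbf 0$ in $\Vm$, then at some point $\curl\vH\ne\mathbf 0=\curl\vB$, so the two curls differ. This is true even after dividing $\vB$ by $\mu_0$.

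There is no real obstacle. The only point needing care is that everything is local to the open interior of $\Vm$. Surface sheets $\vM\times\nhat$ on $\partial\Vm$ are not covered by the pointwise hypothesis. Any such uncompensated sheet affects only global statements, as noted in the discussion of local and global cancellation, and not the interior identities claimed here.
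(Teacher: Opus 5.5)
Your proposal is correct and matches the paper's argument. Both proofs use direct substitution: the pointwise cancellation gives $\Jtot=\mathbf 0$ in $\curl\vB=\mu_0\Jtot$, while the auxiliary law $\curl\vH=\Jf$ is unchanged in magnetostatics, and your caveat that uncompensated sheets $\vM\times\nhat$ on $\partial\Vm$ fall outside this interior statement agrees with the paper's local-versus-global discussion.
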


\noindent Both propositions follow by direct substitution. Equation~\eqref{eq:cancelrho} enters $\dive\vE=\rhotot/\varepsilon_0$ and $\dive\vD=\rho_f$. In magnetostatics, $\Jtot=\mathbf 0$ enters $\curl\vB=\mu_0\Jtot$, while $\curl\vH=\Jf$ remains unchanged.

\FloatBarrier
\subsection{Problem Statement and Main Results}\label{sec:questions}
The compensated relations lead to two separate problems. The first is whether $\vD$ or $\vH$ can remain nonzero inside matter when the corresponding $\vE$ or $\vB$ field vanishes, even though the auxiliary source equation retains a nonzero right-hand side. The second is whether a time-dependent compensated source can emit $\vD$ or $\vH$ into vacuum without the accompanying $\vE$ or $\vB$ field.
\\

Exact static solutions answer the first problem within matter. The remaining auxiliary field is $\vP$ or $-\vM$ and is fixed by the material source distribution. It is not an independent vacuum field. The second problem has a negative answer because the vacuum identities in Sec.~\ref{sec:conflation1} tie $\vD$ to $\vE$ and $\vH$ to $\vB$. The dynamic source problem then reduces to whether the total current vanishes or retains a radiative transverse component. The exterior field of the radiating realization equals a point-dipole field with the closed-form moment of Eq.~\eqref{eq:peff} at every frequency, and Proposition~\ref{prop:unique} shows the nonradiating maintainer to be unique within the compactly supported separable class.

\section{Vacuum and Material Regions}
\label{sec:conflation1}
Equation~\eqref{eq:constit} has different consequences in matter and vacuum. The relation $\vD=\varepsilon_0\vE$ holds wherever $\vP=\mathbf 0$, while $\vH=\vB/\mu_0$ holds wherever $\vM=\mathbf 0$. These conditions are independent. The difference between $\vD$ and $\varepsilon_0\vE$ is $\vP$, and the difference between $\vH$ and $\vB/\mu_0$ is $-\vM$. Both vacuum relations hold together where $\vP=\vM=\mathbf 0$, including the vacuum exterior.
\\

Suppose first that $\vE$ or $\vB$ vanishes inside $\Vm$. If $\vE=\mathbf 0$, then $\vD=\vP$ and $\dive\vD=\dive\vP=-\rho_b=\rho_f$, in agreement with Gauss's law for $\vD$. The displacement field remains nonzero because it equals $\vP$. In the magnetostatic setting of Proposition~\ref{prop:B}, with $\vP=\mathbf 0$, the condition $\vB=\mathbf 0$ gives $\vH=-\vM$ and $\curl\vH=\Jf=-\curl\vM$. The free volume or surface current and the boundary conditions determine this value. Thus an auxiliary field may remain nonzero where the corresponding $\vE$ or $\vB$ field vanishes, but it is fixed by $\vP$ or $-\vM$ and carries no independent vacuum radiation.
\\

In the vacuum exterior $\Vm^c$, the conditions $\vP=\vM=\mathbf 0$ give
\begin{equation}
\vD=\varepsilon_0\vE,\qquad \vH=\frac{1}{\mu_0}\vB \qquad (\vr\in\Vm^c).
\label{eq:vacuum}
\end{equation}
It follows that $\vD=\mathbf 0$ if and only if $\vE=\mathbf 0$, and $\vH=\mathbf 0$ if and only if $\vB=\mathbf 0$, in vacuum. Equation~\eqref{eq:vacuum} holds in any region where the model sets $\vP=\vM=\mathbf 0$. No electromagnetic solution in such a region can have $\vD\neq\mathbf 0$ with $\vE=\mathbf 0$, or $\vH\neq\mathbf 0$ with $\vB=\mathbf 0$. This result rules out an auxiliary-only vacuum field. The remaining issue is whether a charge-canceled body can radiate through a nonzero total current.

\section{Local Differential Constraints and Global Fields}
\label{sec:conflation2}
Neither $\dive\vE=0$ nor $\curl\vB=\mathbf 0$ determines an entire vector field. At a fixed time, the Helmholtz decomposition separates a localized field into longitudinal and transverse parts,
\\
\begin{equation}
\vF=\vF_L+\vF_T,\qquad \curl\vF_L=\mathbf 0,\quad \dive\vF_T=0.
\end{equation}
\\
This is an instantaneous kinematic decomposition. For smooth fields with sufficient decay, it is the standard Helmholtz theorem~\cite{jackson,kobe,arfken,herasheras}. The explicit reconstruction is
\\
\begin{equation}
\vF_L(\vr)=-\nabla\!\int\frac{\nabla'\!\cdot\vF(\vr')}{4\pi|\vr-\vr'|}\,\dd^{3}r',\qquad
\vF_T(\vr)=\curl\!\int\frac{\nabla'\!\times\vF(\vr')}{4\pi|\vr-\vr'|}\,\dd^{3}r',
\label{eq:helmholtz}
\end{equation}
\\
so the longitudinal part is built from the divergence alone and the transverse part from the curl alone. Surface sources can be included distributionally. The decomposition is used here only for compactly supported source currents and for the Coulomb component in Eq.~\eqref{eq:EL}. The outgoing $1/r$ field is obtained separately from the retarded solution in Sec.~\ref{sec:radiation}.
\\

\emph{Electric Case.} Define the instantaneous Coulomb component $\vE_L$ by Eq.~\eqref{eq:EL} and set $\vE_T\equiv\vE-\vE_L$. Since $\dive\vE_T=0$,
\begin{equation}
\dive\vE=\dive\vE_L=\frac{\rhotot}{\varepsilon_0},
\end{equation}
including any surface charge distributionally. The longitudinal component is the instantaneous Coulomb field of the total charge,
\begin{equation}
\vE_L(\vr,t)=-\nabla\,\frac{1}{4\pi\varepsilon_0}\int\frac{\rhotot(\vr',t)}{|\vr-\vr'|}\dd^3r',
\label{eq:EL}
\end{equation}
so $\rhotot=0$ on all of $\mathbb{R}^3$ gives $\vE_L=\mathbf 0$ immediately. No decomposition of the $1/r$ radiation field is needed. Gauss's law does not constrain the transverse component $\vE_T$, which is generally nonzero because Faraday's law gives $\curl\vE_T=-\partial\vB/\partial t$. A local divergence condition is weaker still. A field can be divergence-free in one subregion while containing a harmonic gradient produced by charges outside that region, since the longitudinal and transverse split is global and nonlocal. Global cancellation of free and bound charge therefore removes only the longitudinal Coulomb component of $\vE$. It does not remove the transverse component, which contains the radiative field. A uniform field and a vacuum plane wave are simple nonzero examples with zero divergence. They lie outside the localized-source class used in the global argument, but they directly disprove the local inference $\dive\vE=0\Rightarrow\vE=\mathbf 0$.
\newpage

\emph{Magnetic Case.} The magnetic field is solenoidal, $\dive\vB=0$. In a bounded region where $\curl\vB=\mathbf 0$, the field is locally both curl-free and divergence-free. In a simply connected subregion it can be written as $\vB=-\nabla\Psi_B$, where $\Psi_B$ is harmonic. This field need not vanish. Its value is fixed by boundary data and by currents outside $\Vm$. A zero local curl removes the local current source, not the field itself. External sources, surface currents on $\partial\Vm$, and boundary data can still determine a nonzero interior field. A divergence or curl condition alone cannot determine a three-component field. When $\rhotot=0$ globally, Gauss's law removes the Coulomb component of $\vE$ but leaves its transverse component and any locally harmonic contribution untouched. Likewise, $\curl\vB=\mathbf 0$ removes the local current source rather than $\vB$ itself. The compensated magnetized sphere in Sec.~\ref{sec:bodies} is a special case in which the external and surface contributions also vanish. Only the full global argument then gives $\vB=\mathbf 0$.

\section{Total Current and Radiation}
\label{sec:radiation}
The radiation problem is controlled by the time-dependent transverse part of the total current rather than by the separate appearance of free or bound source terms. Let the sources vary in time while the full Maxwell equations are retained. Both compensated cases can then be described through $\Jtot$.
\\

In the electric case, charge cancellation does not directly fix $\Jtot$. The continuity equation supplies the missing relation. The free and bound currents obey
\begin{equation}
\dive\Jf+\frac{\partial\rho_f}{\partial t}=0,\qquad
\dive\Jb+\frac{\partial\rho_b}{\partial t}=0.
\end{equation}
The second equation follows from $\rho_b=-\dive\vP$ and $\Jb=\partial\vP/\partial t+\curl\vM$, together with $\dive(\curl\vM)=0$. Adding the two continuity equations gives $\partial_t(\rho_f+\rho_b)+\dive\Jtot=0$. If the global distributional form of Eq.~\eqref{eq:cancelrho} holds at every instant, then
\begin{equation}
\dive\Jtot=0.
\label{eq:divJ}
\end{equation}
The condition $\rhotot=0$ will be called charge cancellation, while $\Jtot=\mathbf 0$ will be called total current cancellation. Their combination is complete four-current cancellation. In the magnetic case, take $\vP=\mathbf 0$, so $\Jb=\curl\vM$. Maintaining Eq.~\eqref{eq:cancelJ} then makes the total volume current vanish inside $\Vm$, and Eq.~\eqref{eq:divJ} follows. If $\vP\neq\mathbf 0$, the same result requires the stronger prescription $\Jf=-\partial\vP/\partial t-\curl\vM$, which is the current part of Eq.~\eqref{eq:recipe}. Distributional cancellation must also include every surface sheet. When it does, $\Jtot=\mathbf 0$ globally. Canceling only the volume currents can leave uncompensated surface currents that still source fields, as the magnetized sphere in Sec.~\ref{sec:bodies} shows. In the charge-cancellation case, Eq.~\eqref{eq:divJ} makes the longitudinal Helmholtz component of $\Jtot$, defined by Eq.~\eqref{eq:helmholtz}, divergence-free. That component is also curl-free by definition and has the same falloff, so it vanishes. Thus $\Jtot=\Jtot^{(T)}$ is purely transverse. Total current cancellation is stronger because it sets $\Jtot$ itself to zero.

Taking the curl of Faraday's law, using $\curl\vB=\mu_0\Jtot+c^{-2}\partial\vE/\partial t$, and substituting $\dive\vE=\rhotot/\varepsilon_0$ gives
\begin{equation}
\nabla^2\vE-\frac{1}{c^2}\frac{\partial^2\vE}{\partial t^2}
=\mu_0\frac{\partial\Jtot}{\partial t}+\frac{1}{\varepsilon_0}\nabla\rhotot,\qquad
\nabla^2\vB-\frac{1}{c^2}\frac{\partial^2\vB}{\partial t^2}=-\mu_0\,\curl\Jtot.
\label{eq:wave}
\end{equation}
Under charge cancellation, $\rhotot=0$ distributionally, so the charge term in the $\vE$ equation disappears. Equation~\eqref{eq:divJ} leaves only the transverse source $\mu_0\,\partial\Jtot^{(T)}/\partial t$. Under complete four-current cancellation, both source terms in Eq.~\eqref{eq:wave} vanish. The retarded prescription of Sec.~\ref{sec:maxwell} then gives $\vE=\vB=\mathbf 0$. The interior auxiliary fields $\vD=\vP$ and $\vH=-\vM$ may still remain. In Coulomb gauge, the charge density fixes the longitudinal electric field instantaneously, while the transverse current drives the vector potential and the dynamical transverse field~\cite{cohentannoudji}.
\\

The separate cancellation conditions in Eqs.~\eqref{eq:cancelrho} and~\eqref{eq:cancelJ} depend on the reference frame. A Lorentz boost mixes charge and current within each conserved four-current and also mixes $\vP$ with $\vM$, but it does not convert bound sources into free sources~\cite{hehl,ll,mansuripur}. Complete four-current cancellation, $J^\mu_{\mathrm{tot}}=(c\rhotot,\Jtot)=0$ distributionally, is frame independent. The vacuum identities in Eq.~\eqref{eq:vacuum} are also frame independent, so an auxiliary-only vacuum wave remains impossible in every inertial frame. A boost sets a material boundary in motion, and the explicit static solutions are therefore left in their working frame.
\\

A time-dependent transverse current can radiate, but it need not do so. An exact criterion follows from monochromatic sources in the time domain. Each construction below has the separable form $\vJ(\vr,t)=g(t)\,\vj(\vr)$, where one sinusoidal factor $g(t)$ multiplies a real distributional profile $\vj(\vr)=\vj_V(\vr)\,\chi_{r<R}+\vk(\vr)\,\delta_S$. The functions $\vj_V$ and $\vk$ are the volume and sheet profiles. The sheet profile carries one factor of length, as required for a surface density. Define the spatial cosine transform over all space by
\begin{equation}
\JV(\vq)=\int \vj(\vr)\,\cos(\vq\cdot\vr)\,\dd^{3}r=\int_V \vj_V(\vr)\,\cos(\vq\cdot\vr)\dd V+\oint_S\vk(\vr)\,\cos(\vq\cdot\vr)\dd S,
\label{eq:transform}
\end{equation}
where $\qhat=\vq/q$ for $q\neq0$. The corresponding sine transform vanishes because the profiles used here are even under $\vr\to-\vr$, as shown in Appendix~\ref{app:jq}. The potentials are taken in Lorenz gauge, where $\vE=-\nabla\Phi-\partial\mathbf A/\partial t$ and $\vB=\curl\mathbf A$ with $\dive\mathbf A+c^{-2}\,\partial\Phi/\partial t=0$. The macroscopic equations with total sources then decouple into the driven wave equations
\begin{equation}
\nabla^{2}\Phi-\frac{1}{c^{2}}\frac{\partial^{2}\Phi}{\partial t^{2}}=-\frac{\rhotot}{\varepsilon_0},\qquad
\nabla^{2}\mathbf A-\frac{1}{c^{2}}\frac{\partial^{2}\mathbf A}{\partial t^{2}}=-\mu_0\Jtot,
\label{eq:lorenz}
\end{equation}
and the retarded prescription selects the outgoing particular solution
\begin{equation}
\mathbf A(\vr,t)=\frac{\mu_0}{4\pi}\int\frac{\Jtot\bigl(\vr',\,t-|\vr-\vr'|/c\bigr)}{|\vr-\vr'|}\,\dd^{3}r',
\label{eq:retA}
\end{equation}
with the same form for $\Phi$ driven by $\rhotot/\varepsilon_0$. For a sinusoidal factor, $g(t_r+\nhat\cdot\vr'/c)=g(t_r)\cos(k\nhat\cdot\vr')+[g'(t_r)/\omega]\sin(k\nhat\cdot\vr')$, where $t_r\equiv t-r/c$, $g'\equiv\dd g/\dd t$, and $\nhat$ is the observation direction. The sine term integrates to zero against an even profile. Expanding $|\vr-\vr'|=r-\nhat\cdot\vr'+O(r'^{\,2}/r)$ in the radiation zone, as in the standard treatment~\cite{jackson}, then gives
\begin{equation}
\mathbf A(\vr,t)=\frac{\mu_0}{4\pi r}\,g(t-r/c)\,\JV(k\nhat)+O(r^{-2}),
\qquad k=\frac{\omega}{c}.
\end{equation}
For the compensated constructions below, $\rhotot=0$ distributionally at every time. The retarded scalar potential of Eq.~\eqref{eq:lorenz} therefore vanishes, and $\vE=-\partial\mathbf A/\partial t$ everywhere. Equation~\eqref{eq:divJ} also makes $\Jtot$ divergence-free. Its transform is transverse because $\vq\cdot\JV(\vq)=\int\vj\cdot\nabla\sin(\vq\cdot\vr)\,\dd^3r=-\int\sin(\vq\cdot\vr)\,\dive\vj\,\dd^3r=0$, with the boundary term vanishing for a localized source. The leading $1/r$ field is therefore already transverse. For a general conserved monochromatic source, the $1/r$ part of the retarded scalar potential cancels the longitudinal part of $-\partial\mathbf A/\partial t$ and leaves the same transverse projection~\cite{jackson}. At leading order in $1/r$, this gives $\vE_{\mathrm{rad}}=-\partial\mathbf A_\perp/\partial t$ and $\vB_{\mathrm{rad}}=\nhat\times\vE_{\mathrm{rad}}/c$,
\begin{equation}
\vE_{\mathrm{rad}}(\vr,t)=-\frac{\mu_0\,g'(t-r/c)}{4\pi r}
\Bigl[\JV(k\nhat)-\nhat\bigl(\nhat\cdot\JV(k\nhat)\bigr)\Bigr],
\qquad \vB_{\mathrm{rad}}=\frac{1}{c}\,\nhat\times\vE_{\mathrm{rad}}.
\label{eq:onshell}
\end{equation}
The remaining terms are $O(r^{-2})$. For a general monochromatic source, write
\begin{equation}
\vJ=\vj_c(\vr)\cos\omega t+\vj_s(\vr)\sin\omega t,
\label{eq:quad}
\end{equation}
where $\vj_c$ and $\vj_s$ are the two current quadratures. The same expansion gives, for $\alpha=c,s$,
\begin{equation}
\mathbf C_\alpha(\nhat)=\int \vj_\alpha(\vr')\,\cos(k\nhat\cdot\vr')\,\dd^3r',\qquad
\mathbf S_\alpha(\nhat)=\int \vj_\alpha(\vr')\,\sin(k\nhat\cdot\vr')\,\dd^3r',
\end{equation}
where volume and surface-current terms are included distributionally. In the complex representation of the standard multipole treatment~\cite{jackson}, the radiation amplitude is the Fourier integral of the current on the shell, and $\mathbf C_\alpha-i\,\mathbf S_\alpha=\int\vj_\alpha(\vr')\,e^{-ik\nhat\cdot\vr'}\dd^{3}r'$ recovers it for each quadrature. The real convention is kept throughout this paper. The radiation-zone vector potential is
\begin{equation}
\mathbf A_{\mathrm{rad}}=\frac{\mu_0}{4\pi r}\Bigl\{\cos(\omega t_r)\,[\mathbf C_c+\mathbf S_s]
+\sin(\omega t_r)\,[\mathbf C_s-\mathbf S_c]\Bigr\}.
\label{eq:Aradgen}
\end{equation}
An ideal compact source is called nonradiating when it produces no outgoing field in the source-free exterior under the retarded prescription. For a compact monochromatic outgoing solution, this is equivalent to a vanishing far-field pattern and zero total radiated power. A zero in only one observation direction is a node, not a nonradiating source. The far field is controlled by the transforms on the radiation shell $\vq=k\nhat$. A localized source is nonradiating precisely when the transverse parts of both bracketed combinations vanish for every direction $\nhat$~\cite{goedecke,devaneywolf,gbur}. For the separable even profiles used below, this condition reduces to the single transform $\JV(k\nhat)$.

\begin{proposition}[Uniqueness of the Nonradiating Maintainer]\label{prop:unique}
Fix two compactly supported spatial profiles of the stated distributional form whose currents maintain the same total charge density $\rhotot=0$ when each is driven by a common sinusoidal factor at frequency $\omega$. Suppose this maintenance holds for every $\omega>0$, and suppose both driven currents are nonradiating at every such frequency. Then the two profiles are identical. The complete-cancellation profile of Eq.~\eqref{eq:recipe} is therefore the unique maintainer in this class that never radiates.
\end{proposition}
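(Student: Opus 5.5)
The plan is to take the difference of the two profiles and show that it vanishes. Let $\vj_1$ and $\vj_2$ be the two spatial profiles of the stated distributional form, and set $\vj\equiv\vj_1-\vj_2$. Call the common sinusoidal factor $g(t)$. Each total current $g(t)\,\vj_i(\vr)$ maintains the same $\rhotot=0$, so the continuity equation summed over free and bound parts gives $\dive\vj_i=0$ distributionally, including the surface sheets. This is Eq.~\eqref{eq:divJ} applied to each profile. Hence $\dive\vj=0$. The difference $g(t)\vj$ is itself a compactly supported, conserved current with no charge. The retarded solution is linear, so the radiation-zone field of $g\vj$ is the difference of the two radiation-zone fields. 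Each of these vanishes for every $\omega>0$ by hypothesis, so the far field of $g\vj$ vanishes at every frequency.

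The next step converts this into a statement about the Fourier transform of $\vj$. By Eq.~\eqref{eq:onshell}, and more generally Eq.~\eqref{eq:Aradgen} if the difference is not even, a vanishing far field at frequency $\omega$ means that the transverse projection of the on-shell transform is zero for every direction $\nhat$. Concretely,
\[
\nhat\times\int\vj(\vr)\,e^{-ik\nhat\cdot\vr}\dd^3r=\mathbf 0 .
\]
Here both the cosine and sine parts are included, so the proof does not need the evenness assumption. Divergence-freeness gives $\vq\cdot\hat{\vj}(\vq)=0$, by the same integration by parts that the paper used for $\JV$. The longitudinal component of $\hat\vj$ is therefore already zero, and the full vector $\hat\vj(k\nhat)$ vanishes. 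The hypothesis holds for every $\omega>0$, so $k=\omega/c$ ranges over all of $(0,\infty)$. Together with all directions $\nhat$, this means $\hat\vj(\vq)=\mathbf 0$ for every $\vq\neq\mathbf 0$.

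The final step uses compact support. The profile $\vj$ is a compactly supported distribution, with its volume part and its $\delta_S$ sheet part. By Paley--Wiener, $\hat\vj$ is an entire, and in particular continuous, function of $\vq$. It is zero on $\mathbb{R}^3\setminus\{0\}$, so by continuity it is also zero at $\vq=\mathbf 0$. Fourier inversion for tempered distributions then gives $\vj=\mathbf 0$, that is $\vj_1=\vj_2$. The uniqueness claim in the statement follows by taking $\vj_2$ to be the complete-cancellation profile of Eq.~\eqref{eq:recipe}. That profile has $\Jtot=\mathbf 0$, so it never radiates and belongs to the class. Any other nonradiating maintainer must coincide with it.

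The main obstacle is the reduction from ``nonradiating'' to vanishing of the full on-shell transform. I would handle it by using the complex amplitude $\mathbf C-i\mathbf S$ rather than the cosine transform alone, so that parity is never needed. Transversality then follows from $\dive\vj=0$ with the surface sheets treated distributionally. I also need to check that the relevant current really is the total current, so that both profiles drive $\mathbf A$ through $\Jtot$. Once the transform is known to vanish on every shell $|\vq|=k$ with $k>0$, the analytic-continuation step at $\vq=\mathbf 0$ is routine.
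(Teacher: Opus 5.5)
Your proof is correct and follows the paper's argument. The difference profile is divergence-free, so its on-shell transform is transverse. Nonradiation at every $\omega>0$ then kills that transform on every shell $|\vq|=\omega/c$, and continuity at $\vq=\mathbf 0$ (Paley--Wiener--Schwartz) together with Fourier injectivity gives $\vj_1=\vj_2$. Using the complex amplitude $\mathbf C-i\mathbf S$ instead of separate cosine and sine transforms is only cosmetic. The one step you leave implicit, that a vanishing far field forces both $\mathbf C_\perp$ and $\mathbf S_\perp$ to vanish, holds because for $g(t)=\cos(\omega t+\phi)$ the two brackets in Eq.~\eqref{eq:Aradgen} are an orthogonal, hence invertible, linear combination of $\mathbf C_\perp$ and $\mathbf S_\perp$.
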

\begin{proof}
The difference profile $\Delta\vj$ has vanishing distributional divergence, since both profiles evolve the same charge at each frequency. Its cosine and sine transforms are then purely transverse by the boundary-term argument above. Nonradiating at frequency $\omega$ requires the transverse parts of both transforms to vanish on the shell $q=\omega/c$ in every direction, so ranging over all $\omega>0$ makes the full Fourier transform of $\Delta\vj$ vanish for every $\vq\neq\mathbf 0$, and the value at $\vq=\mathbf 0$ follows by continuity. The transform of a compactly supported distribution extends to an entire function of the complex momentum by the Paley--Wiener--Schwartz theorem~\cite{hormander}, so vanishing for all real $\vq$ forces $\Delta\vj=\mathbf 0$.
\end{proof}

In vacuum, Eq.~\eqref{eq:vacuum} also holds for the radiative parts,
\begin{equation}
\vD_{\mathrm{rad}}=\varepsilon_0\vE_{\mathrm{rad}},\qquad
\vH_{\mathrm{rad}}=\frac{1}{\mu_0}\vB_{\mathrm{rad}} \qquad (\vr\in\Vm^c).
\end{equation}
Thus neither $\vD$-only nor $\vH$-only radiation can exist in vacuum.

\medskip
\noindent No outgoing vacuum solution can contain $\vD$ without $\vE$, or $\vH$ without $\vB$. The maintained magnetized sphere in Sec.~\ref{sec:bodies} gives the stronger case of complete four-current cancellation. It is neutral and satisfies $\rhotot=0$ and $\Jtot=\mathbf 0$ as distributions. Both retarded source terms in Eq.~\eqref{eq:wave} vanish, so the retarded prescription produces no source-generated $\vE$ or $\vB$ field and no radiation. The interior auxiliary values may still remain. For the magnetized sphere used here, $\vD=\mathbf 0$ and $\vH=-\vM$.

\section{Explicit Bodies and Symmetric Analogs}
\label{sec:bodies}
The compensated source conditions can be realized exactly in explicit bodies. Throughout this section, $\vP$ and $\vM$ are prescribed macroscopic source fields, as in remanent polarization or magnetization. Their constant amplitudes are $P_0$ in $\mathrm{C\,m^{-2}}$ and $M_0$ in $\mathrm{A\,m^{-1}}$. No induced linear law such as $\vP=\varepsilon_0\chi_e\vE$ or $\vM=\chi_m\vH$ is assumed. If $\vE=\mathbf 0$, the first law would force $\vP=\mathbf 0$. If $\vB=\mathbf 0$, the second gives $(1+\chi_m)\vM=\mathbf 0$ and therefore $\vM=\mathbf 0$ whenever $1+\chi_m\neq0$. The discussion begins with volume constructions that satisfy the propositions pointwise and then turns to two spheres whose cancellation occurs on the surface. Each sphere is a familiar polarized or magnetized body with one added free sheet that neutralizes its bound source.

\FloatBarrier
\subsection{Volume Constructions}
\label{sec:volume}
Both propositions can be realized pointwise in a compact body without any surface sheet. This is the special case of Eq.~\eqref{eq:recipe} in which the material field vanishes on the boundary. For the electric construction, choose the radial polarization
\begin{equation}
\vP(\vr)=\frac{P_0}{R}\Bigl(1-\frac{r}{R}\Bigr)\vr,\qquad r\le R,
\label{eq:Pvol}
\end{equation}
and set $\vP=\mathbf 0$ outside. Here $\vr$ is the position vector, so away from the origin $\vP=P_0(r/R)(1-r/R)\rhat$. The field is $C^1$ throughout the ball, including the origin, where $\partial_jP_i=(P_0/R)\bigl[(1-r/R)\delta_{ij}-x_ix_j/(Rr)\bigr]\to(P_0/R)\delta_{ij}$. Extending it by zero gives a continuous, piecewise $C^1$ field whose first derivative is discontinuous across $r=R$. Since $\vP$ vanishes at $r=R$, no bound surface charge appears. Its divergence is
\begin{equation}
\dive\vP=P_0\Bigl(\frac{3}{R}-\frac{4r}{R^2}\Bigr)
\label{eq:divPvol}
\end{equation}
which is nonzero in the bulk except on the sphere $r=3R/4$, where it changes sign as shown in Fig.~\ref{fig:volprofiles}. Set the free volume charge to $\rho_f=\dive\vP=-\rho_b$, with $\vM=\mathbf 0$ and no other free sources. The total free charge is zero because $\int\rho_f\dd V=\oint\vP\cdot\dd\mathbf a=0$. Thus $\rhotot=0$ everywhere without surface sheets. In electrostatics, $\curl\vE=\mathbf 0$, and the static assumption in Sec.~\ref{sec:maxwell} gives $\vE=\mathbf 0$ on all of $\mathbb{R}^3$. Therefore
\begin{equation}
\vD=\vP,\qquad \dive\vD=\dive\vP=\rho_f\not\equiv0 \quad (r<R),
\end{equation}
which realizes Proposition~\ref{prop:E} pointwise.

\begin{figure}[!htb]
\centering
\begin{tikzpicture}[line cap=round,x=3.5cm,y=0.72cm]
  \draw[semithick] (0,-1.4) rectangle (1,3.4);
  \foreach \x/\xl in {0/0.0,0.5/0.5,1/1.0}
    {\draw[semithick] (\x,-1.4) -- (\x,-1.22); \draw[semithick] (\x,3.4) -- (\x,3.22);
     \node[below] at (\x,-1.44) {\xl};}
  \foreach \x in {0.25,0.75}
    {\draw[semithick] (\x,-1.4) -- (\x,-1.29); \draw[semithick] (\x,3.4) -- (\x,3.29);}
  \foreach \y in {-1,0,1,2,3}
    {\draw[semithick] (0,\y) -- (0.028,\y); \draw[semithick] (1,\y) -- (0.972,\y);
     \node[left] at (-0.012,\y) {$\y$};}
  \draw[thin] (0,0) -- (1,0);
  \draw[thick,domain=0:1,smooth,samples=80] plot (\x,{3-4*\x});
  \draw[thick,dashed,domain=0:1,smooth,samples=80] plot (\x,{4*\x*(1-\x)});
  \draw[dotted,thick] (0.75,-1.4) -- (0.75,3.4);
  \node[above] at (0.6,3.42) {\footnotesize $r=3R/4$};
  \draw[thick] (0.08,-0.55) -- (0.2,-0.55); \node[right] at (0.21,-0.55) {\footnotesize $R\rho_f/P_0$};
  \draw[thick,dashed] (0.08,-1.05) -- (0.2,-1.05); \node[right] at (0.21,-1.05) {\footnotesize $4P_r/P_0$};
  \node[below=15pt] at (0.5,-1.4) {$r/R$};
  \node[anchor=north west] at (0.02,3.35) {(a)};
\end{tikzpicture}\hspace{0.55cm}
\begin{tikzpicture}[line cap=round,x=0.2692cm,y=3.456cm]
  \draw[semithick] (0,0) rectangle (13,1);
  \foreach \x in {0,2,4,6,8,10,12}
    {\draw[semithick] (\x,0) -- (\x,0.038); \draw[semithick] (\x,1) -- (\x,0.962);
     \node[below] at (\x,-0.012) {$\x$};}
  \foreach \y/\yl in {0/0.0,0.5/0.5,1/1.0}
    {\draw[semithick] (0,\y) -- (0.36,\y); \draw[semithick] (13,\y) -- (12.64,\y);
     \node[left] at (-0.16,\y) {$\yl$};}
  \draw[dotted,thick] (5.763459,0) -- (5.763459,1);
  \draw[dotted,thick] (9.095011,0) -- (9.095011,1);
  \draw[dotted,thick] (12.322941,0) -- (12.322941,1);
  \node[above] at (5.763,1.01) {\footnotesize $5.76$};
  \node[above] at (9.095,1.01) {\footnotesize $9.10$};
  \node[above] at (12.323,1.01) {\footnotesize $12.32$};
  \draw[thick] plot coordinates {
    (0.000000,0.000000) (0.100000,0.000005) (0.200000,0.000075) (0.300000,0.000378) (0.400000,0.001181)
    (0.500000,0.002848) (0.600000,0.005812) (0.700000,0.010568) (0.800000,0.017643) (0.900000,0.027575)
    (1.000000,0.040887) (1.100000,0.058062) (1.200000,0.079518) (1.300000,0.105585) (1.400000,0.136484)
    (1.500000,0.172308) (1.600000,0.213012) (1.700000,0.258400) (1.800000,0.308122) (1.900000,0.361677)
    (2.000000,0.418414) (2.100000,0.477548) (2.200000,0.538173) (2.300000,0.599284) (2.400000,0.659800)
    (2.500000,0.718591) (2.600000,0.774509) (2.700000,0.826417) (2.800000,0.873222) (2.900000,0.913904)
    (3.000000,0.947548) (3.100000,0.973368) (3.200000,0.990732) (3.300000,0.999181) (3.342093,1.000000)
    (3.400000,0.998446) (3.500000,0.988454) (3.600000,0.969333) (3.700000,0.941416) (3.800000,0.905224)
    (3.900000,0.861464) (4.000000,0.811004) (4.100000,0.754854) (4.200000,0.694139) (4.300000,0.630071)
    (4.400000,0.563918) (4.500000,0.496968) (4.600000,0.430500) (4.700000,0.365749) (4.800000,0.303875)
    (4.900000,0.245936) (5.000000,0.192863) (5.100000,0.145436) (5.200000,0.104269) (5.300000,0.069801)
    (5.400000,0.042282) (5.500000,0.021780) (5.600000,0.008181) (5.700000,0.001198) (5.763459,0.000000)
    (5.800000,0.000384) (5.900000,0.005154) (6.000000,0.014802) (6.100000,0.028530) (6.200000,0.045468)
    (6.300000,0.064710) (6.400000,0.085334) (6.500000,0.106435) (6.600000,0.127148) (6.700000,0.146674)
    (6.800000,0.164300) (6.900000,0.179417) (7.000000,0.191534) (7.100000,0.200288) (7.200000,0.205449)
    (7.300000,0.206923) (7.400000,0.204744) (7.500000,0.199075) (7.600000,0.190187) (7.700000,0.178454)
    (7.800000,0.164331) (7.900000,0.148335) (8.000000,0.131029) (8.100000,0.112998) (8.200000,0.094826)
    (8.300000,0.077082) (8.400000,0.060296) (8.500000,0.044947) (8.600000,0.031443) (8.700000,0.020114)
    (8.800000,0.011203) (8.900000,0.004859) (9.000000,0.001138) (9.095011,0.000000) (9.100000,0.000003)
    (9.200000,0.001329) (9.300000,0.004910) (9.400000,0.010472) (9.500000,0.017682) (9.600000,0.026161)
    (9.700000,0.035504) (9.800000,0.045292) (9.900000,0.055106) (10.000000,0.064544) (10.100000,0.073235)
    (10.200000,0.080848) (10.300000,0.087107) (10.400000,0.091795) (10.500000,0.094761) (10.600000,0.095924)
    (10.700000,0.095273) (10.800000,0.092862) (10.900000,0.088812) (11.000000,0.083298) (11.100000,0.076546)
    (11.200000,0.068817) (11.300000,0.060403) (11.400000,0.051612) (11.500000,0.042755) (11.600000,0.034135)
    (11.700000,0.026040) (11.800000,0.018728) (11.900000,0.012420) (12.000000,0.007295) (12.100000,0.003480)
    (12.200000,0.001052) (12.300000,0.000036) (12.322941,0.000000) (12.400000,0.000401) (12.500000,0.002069)
    (12.600000,0.004913) (12.700000,0.008770) (12.800000,0.013441) (12.900000,0.018708) (13.000000,0.024332)
    };
  \node[below=15pt] at (6.5,0) {$kR$};
  \node[rotate=90,above=25pt] at (0,0.5) {$F(kR)$};
  \node[anchor=north west] at (0.3,0.985) {(b)};
\end{tikzpicture}
\caption{(a) Scaled free charge $R\rho_f/P_0$ shown by the solid curve and polarization $4P_r/P_0$ shown by the dashed curve, from Eqs.~\eqref{eq:divPvol} and~\eqref{eq:Pvol}. The free charge changes sign at $r=3R/4$ and integrates to zero. (b) Normalized reduced radiation factor $F(kR)=j_2^{2}(kR)/\max_{x\ge0}j_2^{2}(x)$. The dotted lines mark the first three nonradiating roots. This panel isolates the finite-size Bessel factor in Eq.~\eqref{eq:power}. It is not the normalized total power in a frequency or size sweep because the prefactor changes as well.}
\label{fig:volprofiles}
\end{figure}
\FloatBarrier
\newpage

The magnetic construction uses a finite cylinder with azimuthal magnetization
\begin{equation}
\vM=M_0\,\frac{\rho}{R}\Bigl(1-\frac{\rho}{R}\Bigr)^{\!2}\Bigl(1-\frac{z^2}{L^2}\Bigr)^{\!2}\phihat,
\qquad \rho\le R,\ |z|\le L,
\label{eq:Mvol}
\end{equation}
and $\vM=\mathbf 0$ outside, where $\rho$ is the cylindrical radius. In Cartesian coordinates, $\vM=(M_0/R)(1-\rho/R)^{2}(1-z^{2}/L^{2})^{2}\,(-y,x,0)$. This form is well defined and $C^1$ on the axis, where $M_\phi/\rho\to(M_0/R)(1-z^{2}/L^{2})^{2}$. Since $\vM$ vanishes on the full boundary, no bound surface current appears. The volume current is
\begin{equation}
\Jb=\curl\vM=-\frac{\partial M_\phi}{\partial z}\,\rhohat+\frac{1}{\rho}\frac{\partial(\rho M_\phi)}{\partial \rho}\,\zhat,
\label{eq:curlM}
\end{equation}
with components
\begin{equation}
\begin{aligned}
(\curl\vM)_\rho
&=\frac{4M_0\,\rho z}{RL^2}\Bigl(1-\frac{\rho}{R}\Bigr)^{\!2}\Bigl(1-\frac{z^2}{L^2}\Bigr),
\\
(\curl\vM)_z
&=\frac{2M_0}{R}\Bigl(1-\frac{\rho}{R}\Bigr)\Bigl(1-\frac{2\rho}{R}\Bigr)\Bigl(1-\frac{z^2}{L^2}\Bigr)^{\!2}.
\end{aligned}
\label{eq:curlMcomp}
\end{equation}
The radial component vanishes on the midplane $z=0$, and the axial component vanishes on the cylinder $\rho=R/2$, but neither vanishes everywhere. The factor $1/\rho$ in Eq.~\eqref{eq:curlM} is regular because $(\curl\vM)_z\to(2M_0/R)(1-z^2/L^2)^2$ as $\rho\to0$. Set $\Jf=-\curl\vM$, with $\vP=\mathbf 0$ and $\rho_f=0$. This free current is divergence-free. Its normal component vanishes on the complete boundary because $(\curl\vM)_\rho\propto(1-\rho/R)^2$ on the side $\rho=R$ and $(\curl\vM)_z\propto(1-z^2/L^2)^2$ on the caps $z=\pm L$. The free charge is therefore conserved distributionally without surface terms. Since $\Jtot=\mathbf 0$ everywhere, magnetostatics and the static assumption give $\vB=\mathbf 0$. It follows that $\vH=-\vM$ and $\curl\vH=-\curl\vM=\Jf\not\equiv\mathbf 0$. This is a pointwise realization of Proposition~\ref{prop:B}.

\FloatBarrier
\subsection{Neutralized Electric Sphere}
\label{sec:electret}
Consider a sphere of radius $R$ with uniform polarization $\vP=P_0\,\zhat$, representing an electret. Uniform polarization gives no bound charge in the interior because $\rho_b=-\dive\vP=0$. The bound charge lies entirely on the surface, where $\sigma_b=\vP\cdot\nhat=P_0\cos\theta$. Coat the surface with the free charge $\sigma_f=-\sigma_b$, which is the charge part of Eq.~\eqref{eq:recipe}. The total charge then vanishes as a distribution, $\rhotot=0$. In electrostatics, the static assumption gives
\begin{equation}
\vE=\mathbf 0 \ \text{on all of }\mathbb{R}^3,\qquad
\vD=\varepsilon_0\vE+\vP=\vP\neq\mathbf 0 \ (r<R),
\end{equation}
with $\vD=\mathbf 0$ outside, as shown in Fig.~\ref{fig:electret}. This finite body therefore has $\vD=\vP$ and $\vE=\mathbf 0$. The neutralizing free layer carries the jump in the normal component of $\vD$ across $r=R$. Its effect is easiest to see by first removing it and solving the bare electret in full. Away from the sheet the total charge density vanishes, so the electrostatic potential obeys the Poisson equation $\nabla^{2}\Phi=-\rhotot/\varepsilon_0$ with zero right-hand side, which is the Laplace equation
\begin{equation}
\nabla^{2}\Phi=0 \qquad (r<R\ \text{and}\ r>R).
\label{eq:laplace}
\end{equation}
The problem is axisymmetric, so the separable solutions form the Legendre series
\begin{equation}
\Phi(r,\theta)=\sum_{l=0}^{\infty}\Bigl(A_l r^{l}+\frac{B_l}{r^{l+1}}\Bigr)P_l(\cos\theta).
\label{eq:legendre}
\end{equation}
Regularity at the origin removes every $B_l$ in the interior. Decay at infinity, which is the static assumption here, removes every $A_l$ in the exterior. The only source is the bound sheet $\sigma_b=P_0\cos\theta=P_0\,P_1(\cos\theta)$, whose angular content is pure $l=1$, so orthogonality of the Legendre polynomials removes every other multipole from the solution. The candidate is therefore
\begin{equation}
\Phi_{\mathrm{in}}=a\,r\cos\theta,\qquad \Phi_{\mathrm{out}}=\frac{b\cos\theta}{r^{2}}.
\label{eq:phicand}
\end{equation}
Two interface relations from Sec.~\ref{sec:bc} fix $a$ and $b$. The tangential component of $\vE$ is continuous, which for $\vE=-\nabla\Phi$ is continuity of $\Phi$ at $r=R$. The normal component of $\vD$ jumps by the free surface charge, Eq.~\eqref{eq:bcDE}, and this vanishes for the bare electret. Writing $D_r=\varepsilon_0E_r+P_r$, with $P_r=P_0\cos\theta$ inside and zero outside, converts $D_r^{\mathrm{out}}-D_r^{\mathrm{in}}=0$ into a jump condition on the field. The two specialized conditions are therefore
\begin{equation}
\Phi_{\mathrm{in}}(R,\theta)=\Phi_{\mathrm{out}}(R,\theta),\qquad
E_r^{\mathrm{out}}(R,\theta)-E_r^{\mathrm{in}}(R,\theta)=\frac{P_0\cos\theta}{\varepsilon_0},
\label{eq:bcspecE}
\end{equation}
the first from the tangential condition of Eq.~\eqref{eq:bcEB} and the second from the normal condition of Eq.~\eqref{eq:bcDE} with $\sigma_f=0$. With $E_r=-\partial\Phi/\partial r$, so that $E_r^{\mathrm{in}}=-a\cos\theta$ and $E_r^{\mathrm{out}}=2b\cos\theta/r^{3}$, Eq.~\eqref{eq:bcspecE} becomes
\begin{equation}
aR=\frac{b}{R^{2}},\qquad \frac{2b}{R^{3}}+a=\frac{P_0}{\varepsilon_0}.
\label{eq:matching}
\end{equation}
The first gives $b=aR^{3}$, and substitution into the second gives $3a=P_0/\varepsilon_0$, so
\begin{equation}
a=\frac{P_0}{3\varepsilon_0},\qquad b=\frac{P_0R^{3}}{3\varepsilon_0}.
\label{eq:absol}
\end{equation}
Since $\nabla(r\cos\theta)=\nabla z=\zhat$, the interior field is uniform. The bare electret has $\vE_{\mathrm{in}}=-a\,\zhat=-\vP/(3\varepsilon_0)$ and $\vD_{\mathrm{in}}=\varepsilon_0\vE_{\mathrm{in}}+\vP=\tfrac{2}{3}\vP$, together with an exterior dipole field. Its moment is $4\pi\varepsilon_0\,b\,\zhat=\tfrac{4}{3}\pi R^{3}P_0\,\zhat$, which is exactly the bound dipole $\int_V\vP\dd V$. The free charge sheet removes both the interior depolarizing field and the exterior electric dipole.

\begin{figure}[!htb]
\centering
\begin{tikzpicture}[>=stealth,line cap=round]
  \draw[very thick] (0,0) circle (1.5);
  \foreach \x in {-0.65,0,0.65} \draw[->,thick] (\x,-0.8) -- (\x,0.8);
  \node at (1.05,0.55) {$\vP$};
  \draw (0.42,1.44) -- (1.15,2.05);
  \node[right] at (1.15,2.05) {$\sigma_f=-P_0\cos\theta$ (Free)};
  \draw (-0.42,1.44) -- (-1.15,2.05);
  \node[left] at (-1.15,2.05) {$\sigma_b=+P_0\cos\theta$ (Bound)};
  \node at (0,2.42) {$\sigma_b>0,\ \sigma_f<0$};
  \draw (0,-1.5) -- (0,-1.95);
  \node[below] at (0,-1.95) {$\sigma_b<0,\ \sigma_f>0$ (Lower Hemisphere)};
  \node[align=left,anchor=west] at (2.7,-0.15)
    {$\vE=\mathbf{0}$ everywhere\\[2pt]
     $\vD=\vP$ \ ($r<R$)\\[2pt]
     $\vD=\mathbf{0}$ \ ($r>R$)};
\end{tikzpicture}
\caption{Neutralized electret with uniform $\vP=P_0\zhat$. The bound sheet $\sigma_b=\vP\cdot\nhat$ and the free sheet $\sigma_f=-\sigma_b$ coincide on $r=R$. Their signs reverse between the two hemispheres, and the resulting fields are shown.}
\label{fig:electret}
\end{figure}
\FloatBarrier

For the time-dependent problem, let the polarization oscillate as $\vP(t)=P_s(t)\,\zhat$, where $P_s(t)=P_0\cos\omega t$. Keep the neutralizing layer matched at every instant, so $\sigma_f(t)=-P_s(t)\cos\theta$. A free current is needed to maintain this sheet. Any realization must satisfy $\partial\sigma_f/\partial t+\divs\vK_f=(\Jf^{\mathrm{in}}-\Jf^{\mathrm{out}})\cdot\nhat$. Two choices show the difference between charge cancellation and complete current cancellation. In both cases, the prescribed current represents electrodes and a driving circuit rather than the free response of an isolated electret. The first choice is the volume-current realization, for which $\vK_f=\mathbf 0$. Set $\Jf=-\partial\vP/\partial t$ inside the sphere, as prescribed by the current part of Eq.~\eqref{eq:recipe}. The surface-continuity law reduces to $\Jf\cdot\nhat|_{r=R^-}=-(\dd P_s/\dd t)\cos\theta=\partial\sigma_f/\partial t$, matching the way the polarization current feeds the bound sheet. The volume currents cancel, the coincident charge sheets cancel, and $\rhotot=0$ with $\Jtot=\mathbf 0$. Both retarded source terms in Eq.~\eqref{eq:wave} vanish. Under the retarded prescription, this source does not radiate, just like the compensated magnetized sphere below.
\\

The second choice is the tangential-sheet realization, with no free volume current. Maintain $\sigma_f$ using the tangential sheet on $r=R$,
\begin{equation}
\vK_f=\tfrac{1}{2}R\,\frac{\dd P_s}{\dd t}\,\sin\theta\,\thetahat.
\label{eq:Kf}
\end{equation}
Its surface divergence satisfies the required continuity law. Since $\sigma_f=-P_s(t)\cos\theta$,
\begin{equation}
\divs\vK_f=\frac{1}{R\sin\theta}\frac{\partial}{\partial\theta}
\Bigl[\sin\theta\cdot\tfrac{1}{2}R\,\frac{\dd P_s}{\dd t}\sin\theta\Bigr]=\frac{\dd P_s}{\dd t}\cos\theta=-\frac{\partial\sigma_f}{\partial t}.
\label{eq:surfcont}
\end{equation}
The total current is now nonzero, although global charge cancellation keeps it divergence-free through Eq.~\eqref{eq:divJ}. Since $\rhotot=0$ as a distribution at every instant, every multipole moment of the total charge density vanishes. The free sheet exactly cancels the bound dipole $\vp_b(t)=\int_V\vP\dd V=\tfrac{4}{3}\pi R^3P_s(t)\,\zhat$,
\begin{equation}
\vp_f(t)=\oint_S\vr\,\sigma_f\dd S=-P_s(t)R^3\!\int\rhat\cos\theta\dd\Omega
=-\tfrac{4}{3}\pi R^3P_s(t)\,\zhat=-\vp_b(t),
\end{equation}
because only the $z$ component survives and $\int\cos^2\theta\dd\Omega=4\pi/3$. The ordinary long-wavelength electric-dipole channel is therefore closed. The magnetic-dipole moment also vanishes. The volume integrand $\vr\times\zhat$ has Cartesian components $(y,-x,0)$ and is purely azimuthal. The sheet integrand $\sin\theta\,\nhat\times\thetahat=\sin\theta\,\phihat$ is also azimuthal. Both angular integrals vanish, giving $\mathbf m=\tfrac12\int\vr\times\Jtot\,\dd^3r=\mathbf 0$. Any remaining radiation is therefore not an ordinary magnetic-dipole contribution. Finite-size structure still remains in the divergence-free current, so radiation is not excluded. Its radiation-zone contribution can be found without a long-wavelength approximation. For the tangential-sheet construction, write $\Jtot(\vr,t)=g(t)\,\vj(\vr)$ using the distributional profile in Eq.~\eqref{eq:transform}, with $\vj_V=\zhat$, $\vk=\tfrac12R\sin\theta\,\thetahat$, and $\delta_S=\delta(r-R)$. The profile transform is
\begin{equation}
\JV(\vq)=2\pi R^{3}\,j_2(qR)\,\bigl[\zhat-(\zhat\cdot\qhat)\qhat\bigr],
\label{eq:Jq}
\end{equation}
where $j_2$ is the spherical Bessel function. Appendix~\ref{app:jq} gives the derivation. The transform is purely transverse, as required by Eq.~\eqref{eq:divJ}, and it is even in $\vq$, so only the cosine transform survives. Equation~\eqref{eq:Jq} is written for $\vq\neq\mathbf 0$, while $\JV(\mathbf 0)=\mathbf 0$ follows by continuity. Substituting Eq.~\eqref{eq:Jq} into Eq.~\eqref{eq:onshell}, using $g'(t)=\dd^{2}P_s/\dd t^{2}=-\omega^{2}P_0\cos\omega t$ and $\zhat-(\zhat\cdot\nhat)\nhat=-\sin\theta\,\thetahat$, gives the radiation-zone field. Here $\theta$ is measured from $\zhat$. The result is exact for arbitrary $kR$ at leading order in $1/r$,
\begin{equation}
\vE_{\mathrm{rad}}(\vr,t)=-\frac{\mu_0\,\omega^{2}R^{3}P_0}{2r}\,j_2(kR)\,
\cos[\omega(t-r/c)]\,\sin\theta\,\thetahat,
\label{eq:Erad}
\end{equation}
where $\vB_{\mathrm{rad}}=\nhat\times\vE_{\mathrm{rad}}/c$. The time-averaged Poynting flux is $\langle\mathbf S\rangle=\langle\vE_{\mathrm{rad}}^{\,2}\rangle\,\nhat/(\mu_0c)$. Using $\langle\cos^{2}\rangle=\tfrac12$ and $\int\sin^{2}\theta\dd\Omega=8\pi/3$ gives the total radiated power
\begin{equation}
\overline{\mathcal P}=\frac{\pi\mu_0\,\omega^{4}R^{6}P_0^{2}}{3c}\,j_2^{2}(kR).
\label{eq:power}
\end{equation}
Figure~\ref{fig:volprofiles}(b) shows the normalized factor $F(kR)=j_2^{2}(kR)/\max j_2^{2}$. In a frequency or size sweep, the prefactor in Eq.~\eqref{eq:power} changes as well. The field pattern is proportional to $\sin\theta$, and the power pattern is proportional to $\sin^{2}\theta$. Since $Y_{10}=\sqrt{3/4\pi}\,\cos\theta$ and $\sqrt{4\pi/3}\,(\partial Y_{10}/\partial\theta)\,\thetahat=-\sin\theta\,\thetahat$, the field has the electric-type $l=1$ vector harmonic. Here that harmonic comes from the finite extent of a divergence-free current rather than from a charge dipole. In a Cartesian long-wavelength expansion, the same contribution can be represented through toroidal and higher mean-square-radius terms in the electric-parity coefficient~\cite{dubovik}. Whether toroidal terms should be treated as an independent radiation family depends on convention~\cite{fernandez}. The convention-independent result is the exact on-shell factor $j_2(kR)$. Earlier work on time-dependent toroidal sources found finite radiationless counterparts~\cite{afanasiev}. The present construction starts instead with a compensated body and produces an explicit finite-size coefficient. Abbott and Griffiths treated related radiationless acceleration in infinite cylindrical and planar current distributions~\cite{abbottgriffiths}. The sphere provides a compact compensated realization with $\rhotot\equiv0$.
\\

The exterior field admits a closed form. The pattern of Eq.~\eqref{eq:Erad} is a pure electric-type $l=1$ multipole, and this holds at every $kR$, not only in the long-wavelength limit. An outgoing exterior solution is determined by its far-field pattern, by the same uniqueness theorem invoked below. The complete exterior field of the compensated sphere therefore coincides, for every $r>R$, with the field of a point electric dipole at the origin whose moment is the effective, frequency-dependent quantity
\begin{equation}
\vp_{\mathrm{eff}}(k)=2\pi P_0R^{3}\,j_2(kR)\,\zhat.
\label{eq:peff}
\end{equation}
The full fields follow from the standard oscillating-dipole solution of the multipole treatment~\cite{jackson}, written here in real form with the retarded phase carried by $t_r\equiv t-r/c$ and with the signed scalar amplitude $p_{\mathrm{eff}}(k)=2\pi P_0R^{3}j_2(kR)$, so that $\vp_{\mathrm{eff}}=p_{\mathrm{eff}}\,\zhat$,
\begin{equation}
\scalebox{0.94}{%
$\begin{aligned}
\vE(\vr,t)
&=\frac{p_{\mathrm{eff}}}{4\pi\varepsilon_0}
\Bigl\{-\frac{k^{2}\sin\theta\,\cos(\omega t_r)}{r}\,\thetahat
+\bigl(2\cos\theta\,\rhat+\sin\theta\,\thetahat\bigr)
\Bigl[\frac{\cos(\omega t_r)}{r^{3}}
-\frac{k\sin(\omega t_r)}{r^{2}}\Bigr]\Bigr\},
\\
\vB(\vr,t)
&=-\frac{\mu_0\,\omega k\,p_{\mathrm{eff}}}{4\pi r}\,
\sin\theta\,
\Bigl[\cos(\omega t_r)+\frac{\sin(\omega t_r)}{kr}\Bigr]\,\phihat,
\qquad r>R.
\end{aligned}$%
}
\label{eq:exterior}
\end{equation}
These fields satisfy the source-free Maxwell equations for every $r>0$ and reduce to Eq.~\eqref{eq:Erad} at leading order in $1/r$. The remaining terms give the exact intermediate and near zones outside the source. As $kR\to0$, $\vp_{\mathrm{eff}}\to(2\pi/15)P_0R^{5}k^{2}\,\zhat$, which vanishes quasistatically and matches the $\omega^{8}R^{10}$ power law below. The textbook dipole formula supplies an independent check, since $\overline{\mathcal P}=\mu_0\omega^{4}p_{\mathrm{eff}}^{2}/(12\pi c)$ applied to Eq.~\eqref{eq:peff} reproduces Eq.~\eqref{eq:power} exactly. The boundary-value solution of the companion paper gives the same moment as $\tfrac{3}{2}j_2(kR)$ times the bare dipole $\tfrac{4}{3}\pi R^{3}P_0$, which equals Eq.~\eqref{eq:peff} identically, so the on-shell transform and the direct matching agree at every $kR$~\cite{twocurrents}.
\\

Because the ordinary charge-dipole term vanishes, the first nonzero contribution is suppressed. For $kR\ll1$, $j_2(kR)\simeq(kR)^2/15$, and Eq.~\eqref{eq:power} gives $\overline{\mathcal P}\propto\omega^{8}R^{10}$. The amplitude vanishes on the axis $\theta=0,\pi$ at every frequency, but these are only directional nodes. At a nonzero root of $j_2(kR)$, $\JV(k\nhat)=\mathbf 0$ for every direction. Outside the compact source, the fields then satisfy the homogeneous Helmholtz equations and the outgoing condition. The standard exterior uniqueness theorem indicates that a vanishing far-field pattern forces the outgoing exterior field to vanish, not only its $1/r$ part~\cite{devaneywolf}. These are the exact nonradiating frequencies of the idealized source, shown in Fig.~\ref{fig:volprofiles}(b). Equation~\eqref{eq:exterior} makes the same statement, since $\vp_{\mathrm{eff}}$ vanishes at each root and the entire exterior field, near zone included, vanishes with it. The companion solution reaches the same conclusion from a single exterior coefficient and needs no uniqueness theorem~\cite{twocurrents}. The first occurs at $kR=5.763\ldots$. These frequencies lie well outside the long-wavelength area. At the first root the sphere diameter equals $1.835$ wavelengths, and the later roots occur at still larger size-to-wavelength ratios, so the vanishing is a property of the full finite-size coefficient rather than of any truncated multipole order. The nulls are quadratically sharp. Near a root $x_n$ of $j_2$, Eq.~\eqref{eq:peff} gives $p_{\mathrm{eff}}\approx2\pi P_0R^{3}j_2'(x_n)\,(kR-x_n)$, so the power of Eq.~\eqref{eq:power} rises as $(kR-x_n)^{2}$ from each null, with $j_2'(x_n)=-0.166$, $0.108$, and $-0.080$ at the first three roots. Nonradiating states of the nonzero currents are studied as dynamic anapoles, realized in metamaterials and dielectric nanoparticles~\cite{papasimakis,miroshnichenko}. The compensated sphere reaches its null frequencies through the single exact form factor of Eq.~\eqref{eq:peff} rather than through interference between separate multipole families. The current remains nontrivial, so the electric and magnetic fields cannot both vanish throughout the source region. Otherwise the Amp\`ere-Maxwell equation would require $\Jtot=\mathbf 0$.
\\

Whenever radiation remains, Eq.~\eqref{eq:onshell} shows it to be an ordinary transverse wave carried by $\vE$ and $\vB$, with $\nhat\cdot\vE_{\mathrm{rad}}=\nhat\cdot\vB_{\mathrm{rad}}=0$ and $\vB_{\mathrm{rad}}=\nhat\times\vE_{\mathrm{rad}}/c$, and Eq.~\eqref{eq:vacuum} then fixes $\vD_{\mathrm{rad}}$ and $\vH_{\mathrm{rad}}$ as in Sec.~\ref{sec:radiation}.

\FloatBarrier
\subsection{Compensated Magnetized Sphere}
\label{sec:magnet}
The magnetic analog is a uniformly magnetized sphere of radius $R$, with $\vM=M_0\,\zhat$ and $\vP=\mathbf 0$. Uniform magnetization gives $\Jb=\curl\vM=\mathbf 0$ in the volume. The bound current lies entirely on the surface sheet $\vK_b=\vM\times\nhat=M_0\sin\theta\,\phihat$. For the bare sphere, the calculation parallels the electret of Eqs.~\eqref{eq:laplace} through~\eqref{eq:absol} step for step and shows the magnetic scalar potential at work. There is no free current, so $\curl\vH=\mathbf 0$ in each region and $\vH=-\nabla\Phi_m$ there. Combining $\dive\vB=0$ with $\vB=\mu_0(\vH+\vM)$ gives $\dive\vH=-\dive\vM$, and the uniform interior magnetization is divergence-free in the bulk. The potential therefore obeys
\begin{equation}
\nabla^{2}\Phi_m=0 \qquad (r<R\ \text{and}\ r>R),
\label{eq:laplaceM}
\end{equation}
driven only by the effective magnetic surface charge $\sigma_m=\vM\cdot\nhat=M_0\cos\theta$ on $r=R$. The Legendre series of Eq.~\eqref{eq:legendre}, regularity at the origin, decay at infinity, and the pure $P_1(\cos\theta)$ content of $\sigma_m$ select the same $l=1$ candidate as before,
\begin{equation}
\Phi_{m,\mathrm{in}}=a\,r\cos\theta,\qquad \Phi_{m,\mathrm{out}}=\frac{b\cos\theta}{r^{2}}.
\label{eq:phicandM}
\end{equation}
The interface relations of Sec.~\ref{sec:bc} supply two conditions. The tangential component of $\vH$ is continuous because $\vK_f=\mathbf 0$ in Eq.~\eqref{eq:bcHB}, which is continuity of $\Phi_m$. The normal component of $\vB$ is continuous by Eq.~\eqref{eq:bcEB}, and writing $B_r=\mu_0(H_r+M_r)$ with $M_r=M_0\cos\theta$ inside converts that statement into a jump condition on $H_r$. The specialized pair is
\begin{equation}
\Phi_{m,\mathrm{in}}(R,\theta)=\Phi_{m,\mathrm{out}}(R,\theta),\qquad
H_r^{\mathrm{out}}(R,\theta)-H_r^{\mathrm{in}}(R,\theta)=M_0\cos\theta.
\label{eq:bcspecM}
\end{equation}
The components follow from the candidate potentials,
\begin{equation}
H_r^{\mathrm{in}}=-a\cos\theta,\quad
H_r^{\mathrm{out}}=\frac{2b\cos\theta}{r^{3}},\quad
H_\theta^{\mathrm{in}}=a\sin\theta,\quad
H_\theta^{\mathrm{out}}=\frac{b\sin\theta}{r^{3}},
\label{eq:Hcomp}
\end{equation}
and substituting them into Eq.~\eqref{eq:bcspecM} gives the matching equations
\begin{equation}
aR=\frac{b}{R^{2}},\qquad \frac{2b}{R^{3}}+a=M_0.
\end{equation}
The first relation expresses continuity of $H_\theta$, the second continuity of $B_r$. The first gives $b=aR^{3}$, substitution into the second gives $3a=M_0$, and so
\begin{equation}
a=\frac{M_0}{3},\qquad b=\frac{M_0R^{3}}{3}.
\label{eq:absolM}
\end{equation}
Since $\nabla(r\cos\theta)=\zhat$, the interior field is uniform,
\begin{equation}
\vH_{\mathrm{in}}=-\tfrac13\vM,\qquad
\vB_{\mathrm{in}}=\mu_0(\vH_{\mathrm{in}}+\vM)=\tfrac23\mu_0\vM,
\label{eq:magfields}
\end{equation}
together with an exterior dipole field of moment $4\pi b\,\zhat=\tfrac{4}{3}\pi R^{3}M_0\,\zhat$, which is exactly $\int_V\vM\dd V$. The bound sheet appears in the remaining tangential condition of Eq.~\eqref{eq:bcHB}. With $b=aR^{3}$ and the components of Eq.~\eqref{eq:Hcomp},
\begin{equation}
\frac{1}{\mu_0}\,\nhat\times(\vB_{\mathrm{out}}-\vB_{\mathrm{in}})\Big|_{r=R}
=\Bigl(\frac{b}{R^{3}}-a+M_0\Bigr)\sin\theta\,\phihat
=M_0\sin\theta\,\phihat=\vM\times\nhat=\vK_b,
\label{eq:Kbcheck}
\end{equation}
as Eq.~\eqref{eq:bcHB} requires with $\vK_f=\mathbf 0$. Now add the free surface current $\vK_f=-\vK_b$ from Eq.~\eqref{eq:recipe}. The total surface current vanishes, and $\Jb=\mathbf 0$ in the volume, so $\Jtot=\mathbf 0$ distributionally on all of $\mathbb{R}^3$. In magnetostatics, $\dive\vB=0$ and $\curl\vB=\mathbf 0$ everywhere. The static assumption and uniqueness therefore give
\begin{equation}
\vB=\mathbf 0\ \text{everywhere},\qquad
\vH=\frac{1}{\mu_0}\vB-\vM=-\vM\neq\mathbf 0\ (r<R),
\end{equation}
with $\vH=\mathbf 0$ outside, as shown in Fig.~\ref{fig:magnet}. Relative to the bare sphere, the free current sheet removes both the interior field $\vB=\tfrac23\mu_0\vM$ and the exterior dipole field. Since $\vM$ is uniform, the bulk relation is $\curl\vH=\mathbf 0=\Jf$. This is the compact surface-current counterpart of Proposition~\ref{prop:B}.
\\

For a time-dependent source construction, prescribe $\vM(t)=M_0\cos\omega t\,\zhat$ and maintain $\vK_f(t)=-\vK_b(t)$. The azimuthal sheets are surface-divergence-free, so an initially neutral source develops no surface charge. Thus $\rhotot=0$ and $\Jtot=\mathbf 0$ for all $t$. Both wave-equation sources in Eq.~\eqref{eq:wave} vanish, and the retarded prescription gives no radiation.

\begin{figure}[!htb]
\centering
\begin{tikzpicture}[>=stealth,line cap=round]
  \draw[very thick] (0,0) circle (1.5);
  \draw[->,very thick] (0,-0.8) -- (0,0.8);
  \node at (0.35,0.55) {$\vM$};
  \node[inner sep=1pt] at (13:1.74) {\large$\odot$};
  \node[inner sep=1pt] at (13:1.26) {\large$\otimes$};
  \node[inner sep=1pt] at (167:1.74) {\large$\otimes$};
  \node[inner sep=1pt] at (167:1.26) {\large$\odot$};
  \draw (1.87,0.44) -- (2.42,1.02);
  \node[right] at (2.45,1.05) {$\vK_b=\vM\times\nhat$ (Bound)};
  \draw (1.39,0.18) -- (1.80,0.13);
  \node[right] at (1.85,0.12) {$\vK_f=-\vK_b$ (Free)};
  \node[align=left,anchor=north west] at (1.85,-0.42)
    {$\vB=\mathbf{0}$ everywhere\\[2pt]
     $\vH=-\vM$ \ ($r<R$)\\[2pt]
     $\vH=\mathbf{0}$ \ ($r>R$)};
\end{tikzpicture}
\caption{Compensated magnetized sphere with uniform $\vM=M_0\zhat$ and $\vP=\mathbf 0$. The coincident bound and free current sheets on $r=R$ are opposite at every point. The symbols $\odot$ and $\otimes$ indicate currents out of and into the page. They are drawn with a small radial offset only for visibility. The resulting fields are shown.}
\label{fig:magnet}
\end{figure}
\FloatBarrier

The sphere constructions require surface sources that are maintained in time. They are macroscopic idealizations of electrodes or a wound coil, and the coincident free and bound sheets are distributional limits. A real coating or winding would stand a finite distance from the material and would be connected by leads. It could only approximate the cancellation, and its supply circuit would be part of the full current distribution. The radiation statements therefore apply to the prescribed ideal sources. For a real device, the ``vacuum exterior'' must lie outside every source-bearing component, including electrodes, leads, and return currents.

\FloatBarrier
\subsection{Infinite Planar and Cylindrical Analogs}
\label{sec:analogs}
The distributional compensation rule in Eq.~\eqref{eq:recipe} does not depend on geometry. Infinite planar and cylindrical examples also show the reverse pattern, in which the auxiliary field vanishes while the primary field remains. A uniformly polarized slab occupying $-d/2<z<d/2$, with $\vP=P_0\zhat$, carries only the face charges $\sigma_b=\pm P_0$. A single infinite sheet of uniform charge density $\sigma$ has, by translation symmetry, a field $E_z(z)\,\zhat$ that is odd about the sheet, and the pillbox condition of Eq.~\eqref{eq:bcDE} with the no-background choice fixes it,
\begin{equation}
E_z=\pm\frac{\sigma}{2\varepsilon_0}\qquad(\text{above and below the sheet}).
\label{eq:sheetfield}
\end{equation}
Between the faces, the sheet at $z=+d/2$ with $\sigma_b=+P_0$ contributes $-(P_0/2\varepsilon_0)\zhat$ and the sheet at $z=-d/2$ with $\sigma_b=-P_0$ contributes the same, while outside the pair the two contributions cancel. Superposition therefore gives
\begin{equation}
\vE_{\mathrm{in}}=-\frac{P_0}{\varepsilon_0}\,\zhat,\qquad
\vE_{\mathrm{out}}=\mathbf 0,\qquad
\vD=\varepsilon_0\vE+\vP=\mathbf 0\ \text{everywhere}.
\label{eq:slabfields}
\end{equation} This is the reverse of the neutralized sphere because $\vE$ survives while $\vD$ vanishes. Coating both faces with $\sigma_f=-\sigma_b$ restores the sphere pattern, with $\vE=\mathbf 0$ everywhere and $\vD=\vP$ inside. The magnetic counterpart is an infinite cylinder magnetized along its axis, with $\vM=M_0\zhat$ and $\vP=\mathbf 0$. Its bound sheet $\vK_b=M_0\phihat$ is solenoidal. Translation symmetry, azimuthal symmetry, Maxwell's equations, and regularity permit only constant axial fields inside and outside. The radial and azimuthal components are excluded by $\dive\vB=0$ with regularity on the axis and by circular Amp\`ere loops that enclose no axial current, and $\curl\vB=\mathbf 0$ in each region then makes the surviving $B_z$ constant there. Choose no imposed uniform exterior background, so $\vB_{\mathrm{out}}=\mathbf 0$. The tangential condition of Eq.~\eqref{eq:bcHB} at $\rho=R$, with $\vK_f=\mathbf 0$ and $\rhohat\times\zhat=-\phihat$, reads
\begin{equation}
\frac{1}{\mu_0}\,\rhohat\times(\vB_{\mathrm{out}}-\vB_{\mathrm{in}})
=\frac{B_{\mathrm{in}}}{\mu_0}\,\phihat=\vK_b=M_0\,\phihat,
\label{eq:cyljump}
\end{equation}
\\
so
\\
\begin{equation}
\vB_{\mathrm{in}}=\mu_0M_0\,\zhat=\mu_0\vM,\qquad
\vH=\frac{\vB}{\mu_0}-\vM=\mathbf 0\ \text{everywhere}.
\label{eq:cylfields}
\end{equation} Adding $\vK_f=-\vK_b$ gives $\vB=\mathbf 0$ everywhere and $\vH=-\vM$ inside. In this infinite geometry, the zero exterior constant is a boundary choice rather than a consequence of decay at infinity.
\\

Together, the four bodies show every static possibility. Either field in the pair $(\vE,\vD)$ or $(\vB,\vH)$ can vanish inside matter while the other remains. The complete source distribution and boundary data determine which field vanishes. Equation~\eqref{eq:constit} then fixes the relation between the surviving auxiliary and primary fields through $\vP$ or $\vM$. The infinite examples are symmetry limits and do not contradict the uniqueness results for compact localized sources.

\FloatBarrier
\section{Boundary Conditions}
\label{sec:bc}
The same issue appears at a boundary between matter and vacuum, and the interface relations used throughout follow from the integral forms of the macroscopic equations. Consider a flat patch of the surface with unit normal $\nhat$ pointing from the material into the vacuum. For the normal components, integrate each divergence equation over a small pillbox that straddles the patch, with faces of area $\Delta A$ parallel to the surface and vanishing height. The volume integral of a divergence equals the outward flux, the side walls contribute nothing in the vanishing-height limit, and the enclosed source reduces to the surface density times $\Delta A$. Applied to $\dive\vD=\rho_f$, to $\dive\vE=\rhotot/\varepsilon_0$, and to $\dive\vB=0$, the pillbox gives the normal conditions below, with enclosed surface charges $\sigma_f$, $\sigma_f+\sigma_b$, and zero. For the tangential components, integrate each curl equation over a small rectangular loop with long sides of length $\Delta\ell$ parallel to the surface on either side and vanishing short sides. The circulation picks out the tangential components, the fluxes of $\partial\vB/\partial t$ and $\partial\vD/\partial t$ vanish with the loop area for bounded fields, and the enclosed sheet current reduces to the surface density times $\Delta\ell$. Applied to $\curl\vE=-\partial\vB/\partial t$, to $\curl\vH=\Jf+\partial\vD/\partial t$, and to the $\vB$ form of the Amp\`ere-Maxwell law, the loop gives the tangential conditions. The results are
\begin{align}
\nhat\cdot(\vD_{\mathrm{vac}}-\vD_{\mathrm{mat}})&=\sigma_f, &
\nhat\cdot(\vE_{\mathrm{vac}}-\vE_{\mathrm{mat}})&=\frac{\sigma_f+\sigma_b}{\varepsilon_0},
\label{eq:bcDE}\\
\nhat\times(\vH_{\mathrm{vac}}-\vH_{\mathrm{mat}})&=\vK_f, &
\frac{1}{\mu_0}\nhat\times(\vB_{\mathrm{vac}}-\vB_{\mathrm{mat}})&=\vK_f+\vK_b.
\label{eq:bcHB}
\end{align}
The remaining two conditions at an ordinary interface carry no surface source at all,
\begin{equation}
\nhat\times(\vE_{\mathrm{vac}}-\vE_{\mathrm{mat}})=\mathbf 0,\qquad
\nhat\cdot(\vB_{\mathrm{vac}}-\vB_{\mathrm{mat}})=0.
\label{eq:bcEB}
\end{equation} If the surface sources cancel, so that $\sigma_f+\sigma_b=0$ and $\vK_f+\vK_b=\mathbf 0$, then the normal component of $\vE$ and the tangential component of $\vB$ are continuous. The normal component of $\vD$ still jumps by $\sigma_f$, and the tangential component of $\vH$ jumps by $\vK_f$. Equation~\eqref{eq:constit} resolves the apparent mismatch. Substituting it into the auxiliary boundary conditions reproduces them exactly. The difference between the primary and auxiliary jumps is the polarization surface charge $\sigma_b=\vP\cdot\nhat$ or the magnetization surface current $\vK_b=\vM\times\nhat$ already contained in the definitions.

\section{Structural Comparison of the Electric and Magnetic Cases}
Table~\ref{tab:structure} collects the static correspondence between the electric and magnetic constructions. The dynamic cases differ because charge cancellation gives $\rhotot=0$ and $\dive\Jtot=0$ but may leave $\Jtot\neq\mathbf 0$. Complete four-current cancellation also sets $\Jtot=\mathbf 0$ and removes the retarded source-generated fields.

\begin{table}[!ht]
\caption{Static correspondence between the electric and magnetic constructions. The compact-body rows assume the sources in Secs.~\ref{sec:volume} through~\ref{sec:magnet} and the static assumption. The scalar-potential statements are local. A global single-valued potential also requires zero circulation around every noncontractible loop. The potential $\Psi_B$ is used only where $\curl\vB=\mathbf 0$ and is distinct from the $\vH$ potential $\Phi_m$ of Sec.~\ref{sec:magnet}.}
\label{tab:structure}
\centering
\small
\begin{tabularx}{\textwidth}{@{}YY@{}}
\toprule
\textbf{Electric Case} & \textbf{Magnetic Case ($\vP=\mathbf 0$)} \\
\midrule
\multicolumn{2}{@{}l}{\emph{Cancellation}}\\
$\rho_f+\rho_b=0$ & $\Jf+\curl\vM=\mathbf 0$ \\
\addlinespace[2pt]
\multicolumn{2}{@{}l}{\emph{Definitions and Macroscopic Laws}}\\
$\vD=\varepsilon_0\vE+\vP$ & $\vH=\vB/\mu_0-\vM$ \\
$\dive\vD=\rho_f$ & $\curl\vH=\Jf+\partial\vD/\partial t$ \\
$\dive\vE=\rhotot/\varepsilon_0$ & $\curl\vB=\mu_0\Jtot+c^{-2}\partial\vE/\partial t$ \\
Vacuum with $\vP=\vM=\mathbf 0$ gives $\vD=\varepsilon_0\vE$ & Vacuum gives $\vH=\vB/\mu_0$ \\
\addlinespace[2pt]
\multicolumn{2}{@{}l}{\emph{Local Static Cancellation Inside $\Vm$}}\\
$\dive\vE=0$, $\curl\vE=\mathbf 0$ & $\curl\vB=\mathbf 0$, $\dive\vB=0$ \\
$\vE=-\nabla\Phi_e$, $\nabla^{2}\Phi_e=0$ & $\vB=-\nabla\Psi_B$, $\nabla^{2}\Psi_B=0$ locally \\
\addlinespace[2pt]
\multicolumn{2}{@{}l}{\emph{Compensated Static Compact Bodies}}\\
$\vE=\mathbf 0$, $\vD=\vP$ & $\vB=\mathbf 0$, $\vH=-\vM$ \\
\botrule
\end{tabularx}
\end{table}
\FloatBarrier
\newpage

\section{Conclusion}
Compensated free and bound sources separate the auxiliary source equations from the total sources that determine $\vE$ and $\vB$. Exact static cancellation can produce $\vE=\mathbf 0$ with $\vD=\vP$, or $\vB=\mathbf 0$ with $\vH=-\vM$. The surviving auxiliary field is fixed by the prescribed polarization or magnetization and by the boundary data. It carries no independent degrees of freedom in vacuum.
\\

Time dependence requires a distinction between charge cancellation and complete four-current cancellation. The condition $\rhotot=0$ removes the Coulomb field and gives $\dive\Jtot=0$, but it does not force $\Jtot$ to vanish. A residual divergence-free current can radiate when its on-shell transverse transform is nonzero. For the tangential-sheet sphere, that transform is proportional to $j_2(kR)$. The resulting field is an ordinary transverse electromagnetic wave. Its amplitude is exact for arbitrary $kR$ at leading order in $1/r$, and its power is given by Eq.~\eqref{eq:power}. The full exterior field equals a point-dipole field with the effective moment $2\pi P_0R^{3}j_2(kR)$, near zone included.
\\

The same source has vanishing charge density, vanishing charge multipoles, and zero ordinary magnetic dipole moment at every instant. It nevertheless radiates at generic frequencies because finite-size structure remains in the total current. At the nonzero roots of $j_2(kR)$, the on-shell coefficient vanishes in every direction and the full outgoing exterior field is zero. By contrast, complete four-current cancellation sets $\rhotot=0$ and $\Jtot=\mathbf 0$ distributionally, so all retarded source-generated fields vanish. Within the compactly supported separable class of Proposition~\ref{prop:unique}, that complete cancellation is the unique choice that is nonradiating at every frequency. Since $\vD=\varepsilon_0\vE$ and $\vH=\vB/\mu_0$ in vacuum, neither case permits auxiliary-only radiation.

\FloatBarrier
\backmatter
\begin{appendices}

\section{Derivation of the Current Transform, Eq.~(\ref{eq:Jq})}
\label{app:jq}
The required spherical Bessel functions are
\begin{equation}
j_0(x)=\frac{\sin x}{x},\qquad
j_1(x)=\frac{\sin x}{x^{2}}-\frac{\cos x}{x},\qquad
j_2(x)=\Bigl(\frac{3}{x^{3}}-\frac{1}{x}\Bigr)\sin x-\frac{3\cos x}{x^{2}},
\end{equation}
with $j_0'=-j_1$, $j_2(x)=3j_1(x)/x-j_0(x)$, and $(t^{2}j_1(t))'=t^{2}j_0(t)$. At the origin, $j_0(0)=1$, $\lim_{x\to0}j_1(x)/x=1/3$, and $j_2(x)=x^{2}/15+O(x^{4})$, so every expression below has a regular limit as $x\to0$. Let $\mathbf a=\zhat$, $x=qR$, and $\mathbf n=\vr/R$ on the sphere $r=R$. Every transform below is taken per unit $g(t)$, and both source terms use $g(t)=\dd P_s/\dd t$. Equation~\eqref{eq:transform} defines the cosine-transform convention for the spatial profile $\vj$. The sine transform vanishes by parity. The volume indicator $\chi$ is even under $\vr\to-\vr$. The sheet profile is also unchanged because $(\theta,\phi)\to(\pi-\theta,\phi+\pi)$ leaves $\sin\theta\,\thetahat$ invariant.

Two angular integrals are needed. The scalar integral is
\begin{equation}
I_0(x)\equiv\oint_S \cos(x\,\qhat\cdot\mathbf n)\dd S
=2\pi R^{2}\!\int_{-1}^{1}\cos(xu)\dd u=4\pi R^{2}\,j_0(x).
\end{equation}
Symmetry fixes the tensor integral in the form $I_{ij}\equiv\oint_S n_in_j\,\cos(x\,\qhat\cdot\mathbf n)\dd S=\alpha\,\delta_{ij}+\beta\,\qhat_i\qhat_j$ because $\qhat$ is the only preferred direction. Its trace gives $3\alpha+\beta=I_0=4\pi R^{2}j_0$, while contraction twice with $\qhat$ gives
\begin{equation}
\alpha+\beta=2\pi R^{2}\!\int_{-1}^{1}u^{2}\cos(xu)\dd u
=-4\pi R^{2}\,j_0''(x)=4\pi R^{2}\Bigl[j_0(x)-\frac{2j_1(x)}{x}\Bigr].
\end{equation}
Using $j_0''=2j_1/x-j_0$, solving for $\alpha$ and $\beta$, and applying $j_0-3j_1/x=-j_2$ gives
\begin{equation}
I_{ij}(x)=4\pi R^{2}\Bigl[\frac{j_1(x)}{x}\,\delta_{ij}-j_2(x)\,\qhat_i\qhat_j\Bigr].
\end{equation}

The volume profile is uniform and equal to $\mathbf a$ inside the ball. Since the angular average of $\cos(\vq\cdot\vr)$ is $j_0(qr)$,
\begin{equation}
\JV_{\mathrm{vol}}(\vq)=\mathbf a\int_{r<R}\cos(\vq\cdot\vr)\dd V
=\mathbf a\,\frac{4\pi}{q^{3}}\int_0^{x}t^{2}j_0(t)\dd t
=4\pi R^{3}\,\frac{j_1(x)}{x}\,\mathbf a,
\label{eq:ballt}
\end{equation}
where $(t^{2}j_1)'=t^{2}j_0$ was used. For the sheet, $\zhat=\cos\theta\,\rhat-\sin\theta\,\thetahat$ gives $\sin\theta\,\thetahat=(\mathbf a\cdot\mathbf n)\mathbf n-\mathbf a$. The sheet profile in Eq.~\eqref{eq:Kf} can therefore be written as $\vk=\tfrac12R[(\mathbf a\cdot\mathbf n)\mathbf n-\mathbf a]$, and its transform is
\begin{align}
\JV_{\mathrm{sheet}}(\vq)
&=\tfrac{1}{2}R\bigl[\,I_{ij}a_j-I_0\,a_i\,\bigr]
=2\pi R^{3}\Bigl[\Bigl(\frac{j_1}{x}-j_0\Bigr)\mathbf a-j_2\,(\mathbf a\cdot\qhat)\,\qhat\Bigr]\nonumber\\
&=2\pi R^{3}\Bigl[\Bigl(j_2-\frac{2j_1}{x}\Bigr)\mathbf a-j_2\,(\mathbf a\cdot\qhat)\,\qhat\Bigr],
\end{align}
where the recurrence relation was used again. Adding Eq.~\eqref{eq:ballt} cancels the terms proportional to $j_1/x$ and leaves
\begin{equation}
\JV(\vq)=2\pi R^{3}\,j_2(x)\,\bigl[\mathbf a-(\mathbf a\cdot\qhat)\,\qhat\bigr],
\end{equation}
which is Eq.~\eqref{eq:Jq} for $\vq\neq\mathbf 0$. The value at $\vq=\mathbf 0$ follows from the limit. Two checks are useful. Along the axis, $\mathbf a-(\mathbf a\cdot\qhat)\qhat=\mathbf 0$. The sheet transform there is $-4\pi R^{3}[j_1(x)/x]\mathbf a$, using $\int_{-1}^{1}(1-u^{2})\cos(xu)\dd u=4j_1(x)/x$, and it exactly cancels the volume term. As $\vq\to\mathbf 0$, $j_2\to x^{2}/15$. The quantity $\JV(\mathbf 0)=\int\vj\dd V+\oint\vk\dd S$, multiplied by $g(t)=\dd P_s/\dd t$, equals the time derivative of the total dipole moment. Indeed, $\dd\vp/\dd t=\int\vr\,\partial_t\rho\,\dd^{3}r=-\int\vr\,(\dive\vJ)\,\dd^{3}r=\int\vJ\,\dd^{3}r$, since the boundary term vanishes at spatial infinity for a localized source. Here $\dd(\vp_b+\vp_f)/\dd t=\mathbf 0$, consistent with Eq.~\eqref{eq:Jq} vanishing at $q=0$.

\end{appendices}

\FloatBarrier

\section*{Statements and Declarations}
\noindent\textbf{Funding}\quad No funding was received for this study.

\medskip
\noindent\textbf{Competing Interests}\quad The author declares no competing interests.

\medskip
\noindent\textbf{Ethics Approval and Consent to Participate}\quad Not applicable.

\medskip
\noindent\textbf{Consent for Publication}\quad Not applicable.

\medskip
\noindent\textbf{Data Availability}\quad No datasets were generated or analyzed. The figures are schematic or are direct numerical plots of the closed-form expressions given in the manuscript.

\medskip
\noindent\textbf{Materials Availability}\quad Not applicable.

\medskip
\noindent\textbf{Code Availability}\quad A Python script was used only to prepare the figures. The definitions, derivations, inputs, and conclusions are fully stated in the manuscript and do not depend on access to the script.

\medskip
\noindent\textbf{Author Contributions}\quad N.R. conceived the study, completed the analytical derivations, prepared the figures, and wrote the manuscript.
\newpage

\FloatBarrier

\end{document}